\newtheorem{thm}{Theorem}[section]
\newtheorem{proposition}[thm]{Proposition}
\newtheorem{corollary}[thm]{Corollary}
\newtheorem{remark}[thm]{Remark}
\newtheorem{theorem}[thm]{Theorem}
\title{\textbf{Nonlinear Schr\"odinger type tetrahedron maps}} 
\author{S. Konstantinou-Rizos\thanks{s.konstantinu.rizos@uniyar.ac.ru, skonstantin84@gmail.com}}
\affil{Centre of Integrable Systems, P.G. Demidov Yaroslavl State University, Yaroslavl, Russia}
\theoremstyle{definition}
\DeclareMathOperator{\tr}{tr}
\DeclareMathOperator{\End}{End}
\DeclareMathOperator{\id}{id}
\begin{document}

\maketitle

\begin{abstract}
This paper is concerned with the construction of new solutions in terms of birational maps to the functional tetrahedron equation and parametric tetrahedron equation. We present a method for constructing solutions to the parametric tetrahedron equation via Darboux transformations. In particular, we study matrix refactorisation problems for Darboux transformations associated with the nonlinear Schr\"odinger (NLS) and the derivative nonlinear Schr\"odinger (DNLS) equation, and we construct novel nine-dimensional tetrahedron maps. We show that the latter can be restricted to six-dimensional parametric tetrahedron maps on invariant leaves. Finally, we construct parametric tetrahedron maps employing degenerated Darboux transformations of NLS and DNLS type. 
\end{abstract}

\bigskip

\hspace{.2cm} \textbf{PACS numbers:} 02.30.Ik, 02.90.+p, 03.65.Fd.

\hspace{.2cm} \textbf{Mathematics Subject Classification 2020:} 35Q55, 16T25.

\hspace{.2cm} \textbf{Keywords:} Tetrahedron equation, Parametric tetrahedron maps, Darboux transformations,

\hspace{2.4cm} NLS type equations, Yang--Baxter maps.

\section{Introduction}
The functional tetrahedron equation is a higher-dimensional generalisation of the Yang--Baxter equation, one of the most fundamental equations of Mathematical Physics, and it was first studied by Zamolodchikov in \cite{Zamolodchikov, Zamolodchikov-2}. Ever since, it has attracted the interest of many scientists in the area of Mathematical Physics (indicatively, we refer to \cite{Bazhanov-Sergeev, Bazhanov-Sergeev-2, Dimakis, Doliwa-Kashaev, Gorbounov-Talalaev, Kashaev, Kashaev-Sergeev, Kassotakis-Tetrahedron, Sharygin-Talalaev, Nijhoff, Sergeev}) who studied its solutions from various aspects. 

From the point of view of classical integrable systems, an important result is Sergeev's classification of tetrahedron maps in \cite{Sergeev} where he studied the relation between matrix trifactorisation problems and the tetrahedron equation. Another important result is the connection between the tetrahedron equation and integrable systems on the three-dimensional lattice. In particular, this connection was established in \cite{Kassotakis-Tetrahedron}, as a generalisation of the ideas presented in \cite{Pap-Tongas-Veselov}, and tetrahedron maps together with their vector generalisations were constructed using the invariants of symmetry groups of three-dimensional lattice systems.  Moreover, noncommutative versions of integrable systems have been of great interest over the past few decades, due to the plethora of applications in Physics. In fact, fully noncommutative versions of solutions to the tetrahedron equation have already been found (see, for example, \cite{Doliwa-Kashaev}). Of course, the results related to the tetrahedron equation are not limited to the aforementioned, and there are plenty of other results that highlight the importance of the functional tetrahedron equation. However, the functional tetrahedron equation has not yet reached the same level of attention as its lower-dimensional analogue, the Yang--Baxter equation. This is probably due to the fact that it is more difficult to find solutions to it, and not many methods are yet available for constructing such solutions.

In this paper, we present a method for constructing solutions to the parametric functional tetrahedron equation based on the ideas presented in \cite{Sokor-Sasha}. In particular, we study matrix refactorisation problems for certain Darboux matrices and we derive solutions to the classical (without parameters) functional tetrahedron equation. The entries of each Darboux matrix satisfy a system of differential equations, the so-called B\"acklund transformation. The latter system admits a first integral which indicates the existence of invariant leaves. On these invariant leaves our derived solutions are expressed in terms of parametric birational maps which satisfy the parametric functional tetrahedron equation.

\subsection{Organisation of the paper}
The paper is organised as follows. 

In the next section, we present all the necessary definitions for the text to be self-contained. In particular, we give the definitions of the functional tetrahedron and the parametric tetrahedron equation, and we explain their relation with matrix refactorisation problems. Furthermore, we prove a statement regarding the invariants of the solutions to these equations which indicate the integrability of these solutions. 

In section 3, we present the basic steps of a simple scheme for constructing solutions to the parametric tetrahedron equation. Also, we list all the Darboux transformations that we use throughout the text, without giving details on their derivation. For their construction one can refer to \cite{SPS}.

In section 4, we study matrix refactorisation problems for Darboux matrices associated with the NLS equation, and we derive novel solutions to the tetrahedron equation and the parametric tetrahedron equation. In particular, using an NLS type Darboux transformation, we derive a novel nine-dimensional birational tetrahedron map which can be restricted to a novel six-dimensional, birational, parametric tetrahedron map on invariant leaves. Then, considering the matrix refactorisation problem for a degenerated Darboux matrix of NLS type, we construct another birational, six-dimensional, parametric tetrahedron map which can be restricted to a three-dimensional one on the level sets of its invariants. The latter map at a certain limit gives a map from Sergeev's classification \cite{Sergeev}.

In section 5, we employ Darboux transformations associated with the DNLS equation in order to derive novel solutions to the tetrahedron equation. We first consider a Darboux transformation associated with the DNLS equation, and we obtain a novel nine-dimensional  tetrahedron map the restriction of which on invariant leaves is represented by a novel six-dimensional parametric tetrahedron map. Moreover, we construct another six-dimensional parametric tetrahedron map using a degenerated version of the Darboux transformation for the DNLS equation. The derived six-dimensional map can also be restricted to a three-dimensional one which gives a map from Sergeev's classification at a certain limit. All the maps derived in this section are noninvolutive and birational.

Finally, in section 6, we present a brief summary of the results of the paper and discuss some ideas for potential extensions of the obtained results.

\section{Preliminaries}\label{preliminaries}
\subsection{Tetrahedron maps}
Let $\mathcal{X}$ be an algebraic variety in $\mathbb{C}^N$. A map $T\in\End(\mathcal{X}^3)$, namely
\begin{equation}\label{Tetrahedron_map}
 T:(x,y,z)\mapsto (u(x,y,z),v(x,y,z),w(x,y,z)),
\end{equation}
is called a \textit{tetrahedron map} if it satisfies the \textit{functional tetrahedron} (or Zamolodchikov) equation
\begin{equation}\label{Tetrahedron-eq}
    T^{123}\circ T^{145} \circ T^{246}\circ T^{356}=T^{356}\circ T^{246}\circ T^{145}\circ T^{123}.
\end{equation}
Functions $T^{ijk}\in\End(\mathcal{X}^6)$, $i,j=1,2,3,~i\neq j$, in \eqref{Tetrahedron-eq} are maps that act as map $T$ on the $ijk$ terms of the Cartesian product $\mathcal{X}^6$ and trivially on the others. For instance,
$$
T^{145}(x,y,z,r,s,t)=(u(x,r,s),y,z,v(x,r,s),w(x,r,s),t).
$$

A tetrahedron map can be represented on the faces of the cube, as in Figure \ref{Tet_map}, mapping three neighbour faces of the cube to the rest three on the opposite side.  In particular, a tetrahedron map can be thought as a map mapping the values $x$,  $y$ and $z$, assigned to the back, bottom and left sides of the cube (i.e. faces 1, 2 and 3 in the left half-cube of Figure \ref{Tet_map}), to values $u$, $v$ and $w$ assigned to the front, top and right faces of the cube, respectively.
\begin{figure}[ht]
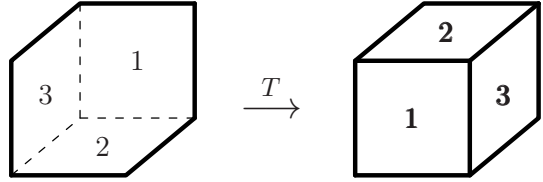

\centering
\centertexdraw{ 
\setunitscale .6
%semicube-1
\move(1.6 1.5) \setgray 0 \linewd 0.04 \lvec (0.6 1.5) \lvec(0 1) \lvec(0 0)
\move (0 0) \linewd 0.04 \lvec (1 0) \lvec(1.6 .5) \linewd 0.01  \lpatt(0.067 0.09) \lvec(0.6 .5) \lvec(0 0) 
\lpatt() \move(1.6 .5)\linewd 0.04 \lvec(1.6 1.5) 
 \linewd 0.01 \lpatt(0.067 0.09)  \move(0.6 1.5)\lvec(0.6 .5) \lpatt()
%semicube-2
\move(3 0)\setgray 0 \linewd 0.04 \lvec(3 1)\lvec(3.6 1.5)  \lvec(4.6 1.5)\setgray 0 \linewd 0.04
\move (3 0) \linewd 0.04 \lvec (4 0) \lvec(4.6 .5) \lvec(4.6 1.5) \linewd 0.04 \lvec(4 1) \lvec(4 0) 
\move(3 1)\lvec(4 1)
%middle-arrow
% \move (1.65 .7)  \setgray 0.6 \arrowheadtype t:F \avec(1.95 0.7) \linewd 0 \setgray 0
%\move (1.95 0.5)   \setgray 0.6 \arrowheadtype t:F \avec(1.65 .5) \linewd 0 \setgray 0

\htext (2 .5) {{\Large $\overset{T}{\longrightarrow}$}}
\textref h:C v:C \htext(.8 .25){2}
\textref h:C v:C \htext(.3 .7){3}
\textref h:C v:C \htext(1.1 1){1}

\textref h:C v:C \htext(3.8 1.25){{\pmb 2}}
\textref h:C v:C \htext(4.3 .7){{\pmb 3}}
\textref h:C v:C \htext(3.5 .5){{\pmb 1}}

%\textref h:C v:C \htext(-0.9 -0.35){(a)}
}
\caption{Tetrahedron map. Schematic representation.}\label{Tet_map}
\end{figure}

The most, probably, celebrated Tetrahedron map is \cite{Kashaev-Sergeev}
$$
(x,y,z)\overset{T}{\rightarrow}\left(\frac{xy}{x+z+xyz},x+z+xyz,\frac{yz}{x+z+xyz}\right),
$$
following from the `star-triangle' transformation in electric circuits.

Now, if we assign the complex parameters $a$, $b$ and $c$ to the variables $x$, $y$ and $z$, respectively, we define a map $T\in\End[(\mathcal{X}\times\mathbb{C})^3]$, namely $T:((x,a),(y,b),(z,c))\mapsto ((u(x,y,z),a),(v(x,y,z),b),(w(x,y,z),c))$ which we denote for simplicity as
\begin{equation}\label{Par-Tetrahedron_map}
 T_{a,b,c}:(x,y,z)\mapsto (u_{a,b,c}(x,y,z),v_{a,b,c}(x,y,z),w_{a,b,c}(x,y,z)).
\end{equation}
Map \eqref{Par-Tetrahedron_map} is called a \textit{parametric tetrahedron map} if it satisfies the \textit{parametric functional tetrahedron equation}
\begin{equation}\label{Par-Tetrahedron-eq}
    T^{123}_{a,b,c}\circ T^{145}_{a,d,e} \circ T^{246}_{b,d,f}\circ T^{356}_{c,e,f}=T^{356}_{c,e,f}\circ T^{246}_{b,d,f}\circ T^{145}_{a,d,e}\circ T^{123}_{a,b,c}.
\end{equation}
%Equation \eqref{Par-Tetrahedron-eq} is more general than \eqref{Tetrahedron-eq}, and all the solutions of \eqref{Tetrahedron-eq} satisfy also \eqref{Par-Tetrahedron-eq}. With the term ``parametric tetrahedron map'' we refer to solutions of \eqref{Par-Tetrahedron-eq} with parameters that do not satisfy \eqref{Tetrahedron-eq}.
Schematically, a parametric tetrahedron map can be understood as in Figure \ref{Tet_map} where the parameters $a$, $b$ and $c$ are placed on the faces of the cube together with the variables $x$, $y$ and $z$, respectively.  Moreover, a four-dimensional representation of the parametric tetrahedron equation can be seen in Figure \ref{Tet_eq-rep}. 

\begin{figure}[ht]
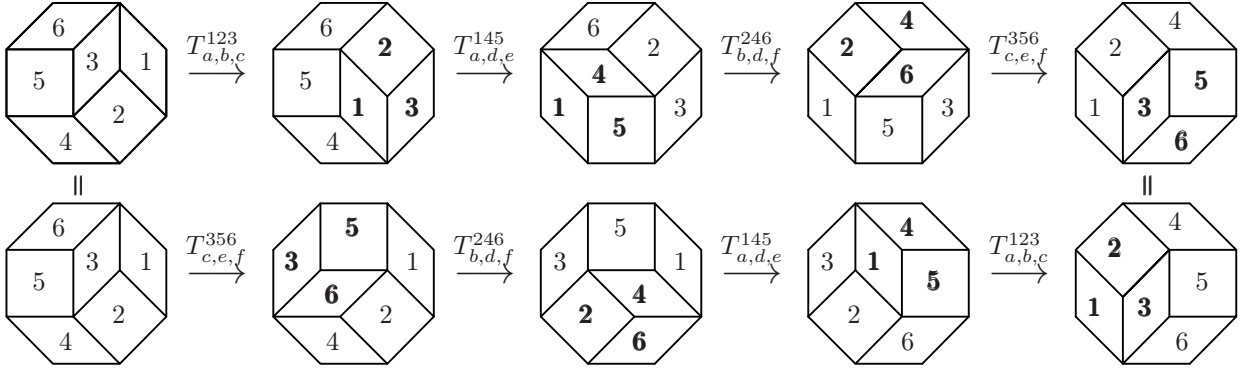

\centering
\centertexdraw{ 
\setunitscale .35
%square-1

%% Left side of the Equation
\move (0 0) \lvec (0 -1)\lvec(0.7 -1.7)\lvec(0.7 -0.7)\lvec(0 0)
\move (0 0) \lvec (-1 0)
\move (0 0)\lvec(-0.7 -0.7)\lvec(-0.7 -1.7)\lvec(0 -1)
\move (-0.7 -0.7)\lvec(-1.7 -0.7)\lvec(-1 0)
\move(-0.7 -1.7)\lvec(0 -2.4)\lvec(0.7 -1.7)
\move(-0.7 -1.7)\lvec(-1.7 -1.7)\lvec(-1.7 -0.7)
\move(0 -2.4)\lvec(-1 -2.4)\lvec(-1.7 -1.7)

\htext (-0.5 -1){{\small 3}}
\htext (0.3 -1){{\small 1}}
\htext (-0.1 -1.8){{\small 2}}
\htext (-.9 -2.2){{\small 4}}
\htext (-1.3 -1.3){{\small 5}}
\htext (-1 -.5){{\small 6}}

\htext (1 -1.2) {{\Large $\overset{T^{123}_{a,b,c}}{\longrightarrow}$}}

\move (4 0) \lvec(4.7 -0.7)\lvec(4.7 -1.7)
\move (4 0) \lvec (3 0)
\move (4 0)\lvec(3.3 -0.7)\lvec(3.3 -1.7)
\move(3.3 -0.7)\lvec(4 -1.4)\lvec(4.7 -0.7)
\move(4 -1.4)\lvec(4 -2.4)
\move (3.3 -0.7)\lvec(2.3 -0.7)\lvec(3 0)
\move(3.3 -1.7)\lvec(4 -2.4)\lvec(4.7 -1.7)
\move(3.3 -1.7)\lvec(2.3 -1.7)\lvec(2.3 -0.7)
\move(4 -2.4)\lvec(3 -2.4)\lvec(2.3 -1.7)

\htext (3.9 -0.8){{\small {\pmb 2}}}
\htext (3.5 -1.7){{\small {\pmb 1}}}
\htext (4.3 -1.7){{\small {\pmb 3}}}
\htext (3.1 -2.2){{\small 4}}
\htext (2.7 -1.3){{\small 5}}
\htext (3 -.5){{\small 6}}

\htext (5 -1.2) {{\Large $\overset{T^{145}_{a,d,e}}{\longrightarrow}$}}

\move (8 0) \lvec(8.7 -0.7)\lvec(8.7 -1.7)
\move (8 0) \lvec (7 0)
\move (8 0)\lvec(7.3 -0.7)
\move(7.3 -0.7)\lvec(8 -1.4)\lvec(8.7 -0.7)
\move(8 -1.4)\lvec(7 -1.4)\lvec(7 -2.4)
\move(7 -1.4)\lvec(6.3 -0.7)
\move(8 -1.4)\lvec(8 -2.4)
\move (7.3 -0.7)\lvec(6.3 -0.7)\lvec(7 0)
\move(8 -2.4)\lvec(8.7 -1.7)
\move(6.3 -1.7)\lvec(6.3 -0.7)
\move(8 -2.4)\lvec(7 -2.4)\lvec(6.3 -1.7)

\htext (7.9 -0.8){{\small 2}}
\htext (6.5 -1.7){{\small {\pmb 1}}}
\htext (8.3 -1.7){{\small 3}}
\htext (7.1 -1.2){{\small {\pmb 4}}}
\htext (7.4 -2){{\small {\pmb 5}}}
\htext (7 -.5){{\small 6}}

\htext (9 -1.2) {{\Large $\overset{T^{246}_{b,d,f}}{\longrightarrow}$}}

\move (12 0) \lvec(12.7 -0.7)\lvec(12.7 -1.7)
\move (12 0) \lvec (11 0)
\move(12 -1.4)\lvec(12.7 -0.7)\lvec(11.7 -0.7)\lvec(11 -1.4)
\lvec(11.7 -0.7)\lvec(11 0)
\move(12 -1.4)\lvec(11 -1.4)\lvec(11 -2.4)
\move(11 -1.4)\lvec(10.3 -0.7)
\move(12 -1.4)\lvec(12 -2.4)
\move(10.3 -0.7)\lvec(11 0)
\move(12 -2.4)\lvec(12.7 -1.7)
\move(10.3 -1.7)\lvec(10.3 -0.7)
\move(12 -2.4)\lvec(11 -2.4)\lvec(10.3 -1.7)

\htext (11.7 -0.4){{\small {\pmb 4}}}
\htext (10.5 -1.7){{\small 1}}
\htext (12.3 -1.7){{\small 3}}
\htext (11.7 -1.2){{\small {\pmb 6}}}
\htext (11.4 -2){{\small 5}}
\htext (10.8 -.8){{\small {\pmb 2}}}

\htext (13 -1.2) {{\Large $\overset{T^{356}_{c,e,f}}{\longrightarrow}$}}

\move (16 0) \lvec(16.7 -0.7)\lvec(16.7 -1.7)\lvec(15.7 -1.7)\lvec(15 -2.4)
\move(15.7 -1.7)\lvec(15.7 -0.7)
\move (16 0) \lvec (15 0)
\move(16.7 -0.7)\lvec(15.7 -0.7)\lvec(15 -1.4)
\lvec(15.7 -0.7)\lvec(15 0)
\move(15 -1.4)\lvec(15 -2.4)
\move(15 -1.4)\lvec(14.3 -0.7)
\move(14.3 -0.7)\lvec(15 0)
\move(16 -2.4)\lvec(16.7 -1.7)
\move(14.3 -1.7)\lvec(14.3 -0.7)
\move(16 -2.4)\lvec(15 -2.4)\lvec(14.3 -1.7)

\htext (15.7 -0.4){{\small 4}}
\htext (14.5 -1.7){{\small 1}}
\htext (16.1 -1.3){{\small {\pmb 5}}}
\htext (15.25 -1.7){{\small {\pmb 3}}}
\htext (15.8 -2.2){{\small {\pmb 6}}}
\htext (14.8 -.8){{\small 2}}

%% Left side of the Equation
\move (0 0) \lvec (0 -1)\lvec(0.7 -1.7)\lvec(0.7 -0.7)\lvec(0 0)
\move (0 0) \lvec (-1 0)
\move (0 0)\lvec(-0.7 -0.7)\lvec(-0.7 -1.7)\lvec(0 -1)
\move (-0.7 -0.7)\lvec(-1.7 -0.7)\lvec(-1 0)
\move(-0.7 -1.7)\lvec(0 -2.4)\lvec(0.7 -1.7)
\move(-0.7 -1.7)\lvec(-1.7 -1.7)\lvec(-1.7 -0.7)
\move(0 -2.4)\lvec(-1 -2.4)\lvec(-1.7 -1.7)

%% Right side of the Equation
\move (0 -3) \lvec (0 -4)\lvec(0.7 -4.7)\lvec(0.7 -3.7)\lvec(0 -3)
\move (0 -3) \lvec (-1 -3)
\move (0 -3)\lvec(-0.7 -3.7)\lvec(-0.7 -4.7)\lvec(0 -4)
\move (-0.7 -3.7)\lvec(-1.7 -3.7)\lvec(-1 -3)
\move(-0.7 -4.7)\lvec(0 -5.4)\lvec(0.7 -4.7)
\move(-0.7 -4.7)\lvec(-1.7 -4.7)\lvec(-1.7 -3.7)
\move(0 -5.4)\lvec(-1 -5.4)\lvec(-1.7 -4.7)

\htext (-0.5 -4){{\small 3}}
\htext (0.3 -4){{\small 1}}
\htext (-0.1 -4.8){{\small 2}}
\htext (-.9 -5.2){{\small 4}}
\htext (-1.3 -4.3){{\small 5}}
\htext (-1 -3.5){{\small 6}}

\htext (1 -4.2) {{\Large $\overset{T^{356}_{c,e,f}}{\longrightarrow}$}}

\move (4 -3) \lvec (4 -4)\lvec(4.7 -4.7)\lvec(4.7 -3.7)\lvec(4 -3)
\move (4 -3) \lvec (3 -3)
\move (3.3 -4.7)\lvec(4 -4)\lvec(3 -4)\lvec(3 -3)
\move (3 -4)\lvec(2.3 -4.7)
\move (2.3 -3.7)\lvec(3 -3)
\move(3.3 -4.7)\lvec(4 -5.4)\lvec(4.7 -4.7)
\move(3.3 -4.7)\lvec(2.3 -4.7)\lvec(2.3 -3.7)
\move(4 -5.4)\lvec(3 -5.4)\lvec(2.3 -4.7)

\htext (3.1 -4.5){{\small {\pmb 6}}}
\htext (4.3 -4){{\small 1}}
\htext (3.9 -4.8){{\small 2}}
\htext (3.1 -5.2){{\small 4}}
\htext (2.5 -4){{\small {\pmb 3}}}
\htext (3.4 -3.5){{\small {\pmb 5}}}

\htext (5 -4.2) {{\Large $\overset{T^{246}_{b,d,f}}{\longrightarrow}$}}

\move (8 -3) \lvec (8 -4)\lvec(8.7 -4.7)\lvec(8.7 -3.7)\lvec(8 -3)
\move (8 -3) \lvec (7 -3)
\move (8 -4)\lvec(7 -4)\lvec(7 -3)
\move (7 -4)\lvec(6.3 -4.7)
\move (6.3 -3.7)\lvec(7 -3)
\move(8 -5.4)\lvec(8.7 -4.7)
\move(6.3 -4.7)\lvec(6.3 -3.7)
\move(8 -5.4)\lvec(7 -5.4)\lvec(6.3 -4.7)
\move (8.7 -4.7)\lvec(7.7 -4.7)\lvec (7 -5.4)
\move (7.7 -4.7)\lvec(7 -4)

\htext (6.9 -4.8){{\small {\pmb 2}}}
\htext (8.3 -4){{\small 1}}
\htext (7.7 -4.5){{\small {\pmb 4}}} 
\htext (7.7 -5.2){{\small {\pmb 6}}} 
\htext (6.5 -4){{\small 3}}
\htext (7.4 -3.5){{\small 5}}

\htext (9 -4.2) {{\Large $\overset{T^{145}_{a,d,e}}{\longrightarrow}$}}

\move (12.7 -4.7)\lvec(12.7 -3.7)\lvec(12 -3)
\move(12.7 -3.7)\lvec(11.7 -3.7)\lvec(11.7 -4.7)
\move(11.7 -3.7)\lvec(11 -3)
\move (12 -3) \lvec (11 -3)
\move(11 -4)\lvec(11 -3)
\move (11 -4)\lvec(10.3 -4.7)
\move (10.3 -3.7)\lvec(11 -3)
\move(12 -5.4)\lvec(12.7 -4.7)
\move(10.3 -4.7)\lvec(10.3 -3.7)
\move(12 -5.4)\lvec(11 -5.4)\lvec(10.3 -4.7)
\move (12.7 -4.7)\lvec(11.7 -4.7)\lvec (11 -5.4)
\move (11.7 -4.7)\lvec(11 -4)

\htext (10.9 -4.8){{\small 2}}
\htext (12.1 -4.3){{\small {\pmb 5}}} 
\htext (11.2 -4){{\small {\pmb 1}}} 
\htext (11.7 -5.2){{\small 6}} 
\htext (10.5 -4){{\small 3}}
\htext (11.7 -3.5){{\small {\pmb 4}}}

\htext (13 -4.2) {{\Large $\overset{T^{123}_{a,b,c}}{\longrightarrow}$}}

\move (16 -3) \lvec(16.7 -3.7)\lvec(16.7 -4.7)\lvec(15.7 -4.7)\lvec(15 -5.4)
\move(15.7 -4.7)\lvec(15.7 -3.7)
\move (16 -3) \lvec (15 -3)
\move(16.7 -3.7)\lvec(15.7 -3.7)\lvec(15 -4.4)
\lvec(15.7 -3.7)\lvec(15 -3)
\move(15 -4.4)\lvec(15 -5.4)
\move(15 -4.4)\lvec(14.3 -3.7)
\move(14.3 -3.7)\lvec(15 -3)
\move(16 -5.4)\lvec(16.7 -4.7)
\move(14.3 -4.7)\lvec(14.3 -3.7)
\move(16 -5.4)\lvec(15 -5.4)\lvec(14.3 -4.7)

\htext (15.7 -3.4){{\small 4}}
\htext (14.5 -4.7){{\small {\pmb 1}}}
\htext (16.1 -4.3){{\small 5}}
\htext (15.25 -4.7){{\small {\pmb 3}}}
\htext (15.8 -5.2){{\small 6}}
\htext (14.8 -3.8){{\small {\pmb 2}}}

\vtext (15.5 -2.9){$\pmb{=}$}
\vtext (-.5 -2.9){$\pmb{=}$}
}
\caption{Tetrahedron equation. Schematic representation (\cite{Doliwa-Kashaev}).}
\label{Tet_eq-rep}
\end{figure}

Only a handful of solutions to the parametric tetrahedron equation \eqref{Par-Tetrahedron-eq} are known to date; an interesting parametric tetrahedron map can be found in \cite{Bazhanov-Sergeev}.

\subsection{Tetrahedron maps and matrix refactorisation problems}
Let $L=L(x,a;\lambda)$ be a matrix depending on a variable $x\in\mathcal{X}$, a parameter $a\in\mathbb{C}$ and a spectral parameter $\lambda\in\mathbb{C}$ of the form
\begin{equation}\label{matrix-L}
   L(x,a;\lambda)= \begin{pmatrix} 
A(x,a;\lambda) & B(x,a;\lambda)\\ 
C(x,a;\lambda) & D(x,a;\lambda)
\end{pmatrix},
\end{equation}
where its entries $A,B,C,D$ are scalar functions of $x$, $a$ and $\lambda$. Moreover, we define the following matrices
{\small
\begin{equation}\label{Lij-mat}
   L_{12}=\begin{pmatrix} 
 A(x,a;\lambda) &  B(x,a;\lambda) & 0\\ 
C(x,a;\lambda) &  D(x,a;\lambda) & 0\\
0 & 0 & 1
\end{pmatrix},\quad
 L_{13}= \begin{pmatrix} 
 A(x,a;\lambda) & 0 & B(x,a;\lambda)\\ 
0 & 1 & 0\\
C(x,a;\lambda) & 0 & D(x,a;\lambda)
\end{pmatrix}, \quad
 L_{23}=\begin{pmatrix} 
   1 & 0 & 0 \\
0 & A(x,a;\lambda) & B(x,a;\lambda)\\ 
0 & C(x,a;\lambda) & D(x,a;\lambda)
\end{pmatrix},
\end{equation}
}
where $L_{ij}=L_{ij}(x,a;\lambda)$, $i,j=1,2,3$.

Following the work of Sergeev \cite{Sergeev} and the work of Kashaev, Korepanov and Sergeev \cite{Kashaev-Sergeev} we study the solutions of the following matrix trifactorisation problem
\begin{equation}\label{Lax-Tetra}
    L_{12}(u,a;\lambda)L_{13}(v,b;\lambda)L_{23}(w,c;\lambda)= L_{23}(z,c;\lambda)L_{13}(y,b;\lambda)L_{12}(x,a;\lambda).
\end{equation}
If the above matrix trifactorisation problem defines a map, we will call equation \eqref{Lax-Tetra} its \textit{Lax representation}. Equation \eqref{Lax-Tetra} was also studied by Korepanov in \cite{Korepanov} in the more general case where functions $A, B, C$ and $D$ in \eqref{Lij-mat} are matrices.

Unlike the case of matrix refactorisation problems associated with solutions to the Yang--Baxter equation \cite{Veselov2}, the matrix rafactorisation problem \eqref{Lax-Tetra} does not admit the symmetry $(u,v,w;a,b,c)\rightarrow (x,y,z;c,b,a)$. That means that the rationality of a map $(x,y,z)\rightarrow(u(x,y;a,b),v(x,y;a,b),w(x,y;a,b))$ defined by \eqref{Lax-Tetra} does not necessarily imply its birationality. Moreover, the trace of the right-hand side of \eqref{Lax-Tetra} does not necessarily generate invariants for the former map. However, we have the following.

\begin{proposition}\label{symm-Map-inv}
If $L=L(x,a;\lambda)$ is a matrix of the form \eqref{matrix-L} with $B(x,a;\lambda)=C(x,a;\lambda)$, then the quantity
$$
\tr\big(L_{23}(z,c;\lambda)L_{13}(y,b;\lambda)L_{12}(x,a;\lambda)\big)
$$ 
is a generator of invariants of the map $(x,y,z)\rightarrow(u(x,y;a,b),v(x,y;a,b),w(x,y;a,b))$ defined by \eqref{Lax-Tetra}.
\end{proposition}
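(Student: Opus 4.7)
The plan is to use the condition $B=C$ to make each $L_{ij}$ symmetric as a $3\times 3$ matrix, and then exploit the fact that the transpose of a product of symmetric matrices reverses the order while preserving the trace. Note that by the cyclic property of the trace we always have $\tr(L_{12}(u,a;\lambda)L_{13}(v,b;\lambda)L_{23}(w,c;\lambda))=\tr(L_{23}(z,c;\lambda)L_{13}(y,b;\lambda)L_{12}(x,a;\lambda))$ as an immediate consequence of \eqref{Lax-Tetra}; however, what the proposition actually demands is equality with $\tr(L_{23}(w,c;\lambda)L_{13}(v,b;\lambda)L_{12}(u,a;\lambda))$, i.e.\ the trace read in the \emph{reversed} order. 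That reversal does not follow from cyclicity alone, which is exactly where the hypothesis $B=C$ enters.

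The steps I would carry out are as follows. First, observe that the assumption $B(x,a;\lambda)=C(x,a;\lambda)$ means $L(x,a;\lambda)^{\ts}=L(x,a;\lambda)$, and since the $3\times 3$ embeddings in \eqref{Lij-mat} only pad the $2\times 2$ block with rows/columns of an identity matrix, each of $L_{12}(x,a;\lambda)$, $L_{13}(y,b;\lambda)$, $L_{23}(z,c;\lambda)$ is symmetric as well. Second, apply transposition to the product in the right-hand side of \eqref{Lax-Tetra}:
\begin{equation*}
\bigl(L_{23}(z,c;\lambda)L_{13}(y,b;\lambda)L_{12}(x,a;\lambda)\bigr)^{\ts}=L_{12}(x,a;\lambda)L_{13}(y,b;\lambda)L_{23}(z,c;\lambda),
\end{equation*}
and use that $\tr(M)=\tr(M^{\ts})$ to conclude $\tr\bigl(L_{23}(z,c;\lambda)L_{13}(y,b;\lambda)L_{12}(x,a;\lambda)\bigr)=\tr\bigl(L_{12}(x,a;\lambda)L_{13}(y,b;\lambda)L_{23}(z,c;\lambda)\bigr)$, and similarly for the $(u,v,w)$ side.

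Third, combine this with \eqref{Lax-Tetra}. The Lax equation yields
\begin{equation*}
\tr\bigl(L_{12}(u,a;\lambda)L_{13}(v,b;\lambda)L_{23}(w,c;\lambda)\bigr)=\tr\bigl(L_{12}(x,a;\lambda)L_{13}(y,b;\lambda)L_{23}(z,c;\lambda)\bigr),
\end{equation*}
and applying the trace-transpose identity established in the previous step on both sides converts this to
\begin{equation*}
\tr\bigl(L_{23}(w,c;\lambda)L_{13}(v,b;\lambda)L_{12}(u,a;\lambda)\bigr)=\tr\bigl(L_{23}(z,c;\lambda)L_{13}(y,b;\lambda)L_{12}(x,a;\lambda)\bigr),
\end{equation*}
which is precisely the invariance of the stated trace under the map $(x,y,z)\mapsto (u,v,w)$. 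Finally, since the trace is a polynomial (or more generally a rational function) in $\lambda$, every coefficient of its $\lambda$-expansion is a map-invariant, so the trace serves as a generating function for a family of invariants.

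The only nontrivial point is recognising that the proposition is not a triviality coming from the cyclic property of the trace but rather requires the symmetry $L=L^{\ts}$ to flip the order of the factors; all remaining steps are routine. No delicate calculation with the explicit entries of $A$ and $D$ is needed, which is why the statement holds in such generality.
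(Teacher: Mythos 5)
Your proof is correct and follows essentially the same route as the paper: use $B=C$ to make each embedded matrix $L_{ij}$ symmetric, reverse the order of the triple product via transposition while preserving the trace, combine with \eqref{Lax-Tetra}, and expand in $\lambda$ to extract the invariants. (One small slip in wording: the initial equality of the two traces is not an instance of cyclicity but simply the fact that the two sides of \eqref{Lax-Tetra} are equal matrices; this does not affect the argument, and you correctly identify that the genuine content lies in the order reversal supplied by the symmetry $L_{ij}=L_{ij}^{\ts}$.)
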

\begin{proof}
Indeed, if $B(x,a;\lambda)=C(x,a;\lambda)$, then the matrices $L_{ij}$, $i,j=1,2,3$, $i<j$, are symmetric, namely $L_{ij}=L_{ij}^T$. Therefore, the trace of the left-hand side of \eqref{Lax-Tetra}:
\begin{align}\label{trace-prop}
\tr\big(L_{12}(u,a;\lambda)L_{13}(v,b;\lambda)L_{23}(w,c;\lambda)\big)&\overset{\eqref{Lax-Tetra}}{=}\tr\big(L_{23}(z,c;\lambda)L_{13}(y,b;\lambda)L_{12}(x,a;\lambda)\big)\nonumber\\
& = \tr\big((L_{23}(z,c;\lambda)L_{13}(y,b;\lambda)L_{12}(x,a;\lambda))^T\big)\nonumber\\
& =  \tr\big(L_{12}^T(x,a;\lambda)L_{13}^T(y,b;\lambda)L_{23}^T(z,c;\lambda)\big)\nonumber\\
& =\tr\big(L_{12}(x,a;\lambda)L_{13}(y,b;\lambda)L_{23}(z,c;\lambda)\big),
\end{align}
since $L_{ij}=L_{ij}^T$, $L_{ij}$, $i,j=1,2,3$, $i<j$.

Now, if we expand in $\lambda$: $\tr\big(L_{23}(z,c;\lambda)L_{13}(y,b;\lambda)L_{12}(x,a;\lambda)\big)=\sum_{k}I_k(x,y,z)\lambda^k$, then from \eqref{trace-prop} follows that
$$
I_k(x,y,z)=I_k(u,v,w),
$$ 
i.e. $I_k(x,y,z)$ are invariants of the map $(x,y,z)\rightarrow(u(x,y;a,b),v(x,y;a,b),w(x,y;a,b))$ defined by \eqref{Lax-Tetra}.
\end{proof}

%\section{Nonlinear Schr\"odinger type Darboux transformations}
\section{Derivation of parametric tetrahedron maps: a Darboux construction scheme}
In this section, following the ideas presented in \cite{Sokor-Sasha} for the case of parametric Yang--Baxter maps, we demonstrate a scheme for constructing solutions to the parametric tetrahedron equation using Darboux transformations. The scheme can be summarised in Figure \ref{Darboux scheme}. 
\begin{figure}[ht]
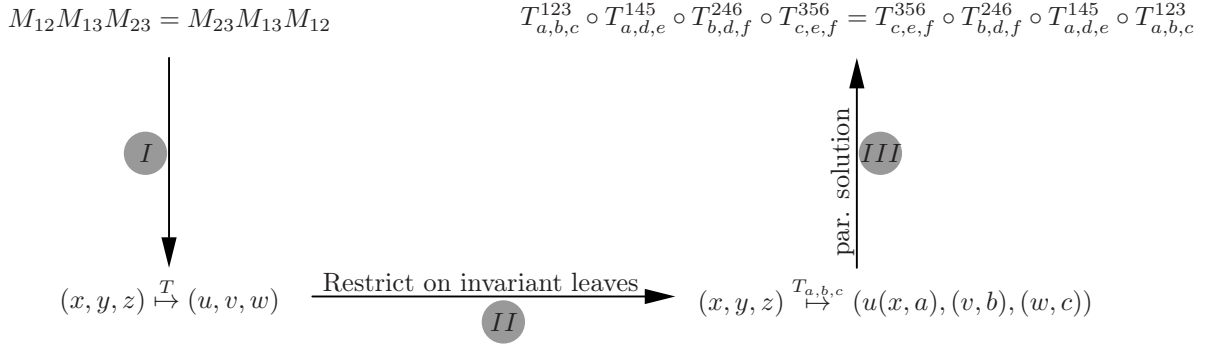

\centering
\centertexdraw{ 
\setunitscale 0.5
\move (-3.2 -.7)  \arrowheadtype t:F \avec(.6 -.7)
\move (-4.7 1.8)  \arrowheadtype t:F \avec(-4.7 -.4)
\move (2.5 -.4)  \arrowheadtype t:F \avec(2.5 1.8) 
\textref h:C v:C \htext(-4.7 2.2){\small $M_{12}M_{13}M_{23}=M_{23}M_{13}M_{12}$}
\textref h:C v:C \htext(2.5 2.2){\small $T^{123}_{a,b,c}\circ T^{145}_{a,d,e} \circ T^{246}_{b,d,f}\circ T^{356}_{c,e,f}=T^{356}_{c,e,f}\circ T^{246}_{b,d,f}\circ T^{145}_{a,d,e}\circ T^{123}_{a,b,c}$}
\textref h:C v:C \htext(-4.7 -.7){\small $(x,y,z)\stackrel{T}{\mapsto}(u,v,w)$}
\textref h:C v:C \htext(2.9 -.7){\small $(x,y,z)\stackrel{T_{a,b,c}}{\mapsto}(u(x,a),(v,b),(w,c))$}\lpatt(0.067 0.1)
%\move (-1.4 2.2)  \arrowheadtype t:F \avec(1.4 2.2)
\move (-4.95 .8)\fcir f:.6 r:.23
\textref h:C v:C \small{\htext(-4.95 .8){$I$}}
\move (2.75 .8)\fcir f:.6 r:.23
\textref h:C v:C \small{\htext(2.75 .8){$III$}}
\move (-1.2 -.97)\fcir f:.6 r:.23
\textref h:C v:C \small{\htext(-1.2 -.97){$II$}}
%\textref h:C v:C \small{\htext(0 2.3){\small Grassmann extension}}
\textref h:C v:C \small{\htext(-1.45 -.55){\small Restrict on invariant leaves}}
%\textref h:C v:C \small{\vtext(-2.1 1.2){\small Lift to Yang-Baxter}}
\textref h:C v:C \small{\vtext(2.35 .5){\small par. solution}}
}
\caption{Darboux construction scheme.}\label{Darboux scheme}
\end{figure}

\textbf{I.} Consider matrix trifactorisation problems \eqref{Lax-Tetra} for Darboux matrices associated with Lax operators of the form $\mathcal{L}=D_x+U(\mathbf{u}(x,t),\lambda)$, where $U(\mathbf{u}(x,t),\lambda)$ belongs to the Lie algebra $\mathfrak{sl}_2$ and depends on a (vector) potential function $\mathbf{u}$, a spectral parameter $\lambda$ and two variables $x$ and  $t$, implicitly through the potential function. From such matrix refactorisation problems we derive tetrahedron maps.

\textbf{II.} For such Darboux transformations as in step \textbf{I}, the associated B\"acklund transformations admit first integrals (see \cite{SPS}). These first integrals indicate the existence of inviariants of the tetrahedron maps derived in step \textbf{I}.  On the level sets of these integrals, the tetrahedron maps derived in step \textbf{I} are represented as parametric maps.

\textbf{III.} These parametric maps are solutions to the parametric tetrahedron equation.

In the next section, we apply the above scheme to particular Darboux transformations related to the NLS and the DNLS equation.

\subsection{Nonlinear Schr\"odinger type Darboux transformations}
Here, we list all the Darboux transformations that we use in this text. For details regarding their derivation, one can refer to \cite{SPS}. In particular, we use Darboux transformations associated with the Lax operator $\mathcal{L}_{NLS} \in\mathfrak{sl}_2\left[\lambda\right]\left[D_x\right]$ and $\mathcal{L}_{DNLS}\in\mathfrak{sl}_2[\lambda]^{\langle s_1 \rangle}[D_x]$, namely
\begin{equation}\label{operators}
\mathcal{L}_{NLS} =D_x+\lambda \left(\begin{array}{cc}1 & 0\\0 & -1\end{array}\right)+\left(\begin{array}{cc}0 & 2p\\2q & 0\end{array}\right),\quad
\mathcal{L}_{DNLS}=D_x+\lambda^2 \left(\begin{array}{cc}1 & 0\\0 & -1\end{array}\right)+\lambda\left(\begin{array}{cc}0 & 2p\\2q & 0\end{array}\right)
\end{equation}
i.e. the spatial parts of the Lax pairs for the NLS and the DNLS equation, respectively.

\begin{enumerate}
    \item A Darboux transformation for $\mathcal{L}_{NLS}$ in \eqref{operators} is:
\begin{equation}\label{DM-NLS}
  M=\lambda \left(
     \begin{array}{cc}
         1 & 0\\
         0 & 0
     \end{array}\right)+\left(
     \begin{array}{cc}
         f & p\\
         \tilde{q} & 1
     \end{array}\right),
\end{equation}
where its entries obey the system of equations
$\partial_x f=2 (pq-\tilde{p}\tilde{q}), \quad \partial_x p =2 (p f -\tilde{p}),\quad \partial_x \tilde{q}=2 ( q-\tilde{q}f)$, i.e. the so-called B\"acklund transformation. A first integral of this  system of differential equations is
\begin{equation}\label{1stInt}
\partial_x (f-p\tilde{q})=0.
\end{equation}
\item A `degenerated' Darboux Matrix for $\mathcal{L}_{NLS}$ reads
\begin{equation}\label{M-degen}
M_a(p,f)=\lambda \left(\begin{array}{cc}1 & 0\\0 & 0\end{array}\right)+\left(\begin{array}{cc}f & p\\\frac{a}{p} & 0 \end{array}\right),\quad f=\frac{p_x}{2p}.
\end{equation}
\item A Darboux transformation associated with operator $\mathcal{L}_{DNLS}$ in \eqref{operators} is
\begin{equation} \label{DT-sl2-gen}
M(p,\tilde{q},f) := \lambda^{2}\left(\begin{array}{cc} f & 0\\ 0 & 0\end{array}\right)+\lambda\left(\begin{array}{cc} 0 & fp\\ f\tilde{q} & 0\end{array}\right)+\left(\begin{array}{cc} 0 & 0\\ 0 & 1 \end{array}\right),
\end{equation}
where $p$ and $q$ satisfy a system of differential equations which possesses the following first integral,
\begin{equation} \label{sl2-D-con-det-gen}
\partial_x \left(f^{2}p\tilde{q}- f\right)=0.
\end{equation}
\item A `degenerated' Darboux transformation related to $\mathcal{L}_{DNLS}$ is
\begin{equation} \label{DT-sl2-degen}
M(p,f;a) := \lambda^{2}\left(\begin{array}{cc} f & 0\\ 0 & 0\end{array}\right)+\lambda\left(\begin{array}{cc} 0 & fp\\ \frac{a}{fp} & 0\end{array}\right)+\left(\begin{array}{cc} 1 & 0\\ 0 & 0 \end{array}\right).
\end{equation}
\end{enumerate}

In the following sections, we study matrix trifactorsation problems associated with the above Darboux matrices in order to derive tetrahedron maps. As we shall see, the existence of first integrals \eqref{1stInt} and \eqref{sl2-D-con-det-gen} play an important role, since they indicate the invariant leaves on which the derived tetrahedron maps are expressed as parametric tetrahedron maps, namely solutions to \eqref{Par-Tetrahedron-eq}.

\section{Nonlinear Schr\"odinger type tetrahedron maps}
In this section, we study matrix trifactorisation problems \eqref{Lax-Tetra} for the Darboux transformations \eqref{DM-NLS} and \eqref{M-degen} associated to the NLS equation, and we construct novel solutions to the tetrahedron equation and the parametric tetrahedron equation. All the derived maps are noninvolutive which are in general more interesting comparing to involutive ones, since involutive maps have trivial dynamics. 

\subsection{A novel nine-dimensional tetrahedron map}
Changing $(p,\tilde{q},f+\lambda)\rightarrow (x_1,x_2,X)$ in \eqref{DM-NLS}, we define the following matrix
\begin{equation}\label{M-NLS}
  M(x_1,x_2,X)=\left(
     \begin{array}{cc}
         X & x_1\\
         x_2 & 1
     \end{array}\right).
\end{equation}
For this matrix \eqref{M-NLS} we consider the matrix trifactorisation problem \eqref{Lax-Tetra}, namely
\begin{equation}\label{NLS-Lax-Tetra}
    M_{12}(u_1,u_2,U)M_{13}(v_1,v_2,V)M_{23}(w_1,w_2,W)= M_{23}(z_1,z_2,Z)M_{13}(y_1,y_2,Y)M_{12}(x_1,x_2,X).
\end{equation}
The above matrix refactorisation problem implies the following:
\begin{subequations}\label{NLS-correspondence}
\begin{align}
    u_1&=\frac{x_1(y_1y_2-Y)+y_1z_2}{z_1z_2-Z},\quad u_2=\frac{x_2Z+y_2z_1X}{XY}U\label{NLS-correspondence-a}\\
    v_1&=\frac{y_1Z+x_1z_1(y_1y_2-Y)}{y_1y_2z_1(x_1y_2+z_2)X-(x_1y_2z_1+z_1z_2-Z)XY+x_2[y_1z_2+x_1(y_1y_2-Y)]Z}\frac{XY}{U}\label{NLS-correspondence-b}\\
    v_2&=x_2z_2+y_2X,\quad V=\frac{XY}{U\label{NLS-correspondence-c}}\\
    w_1&=\frac{[x_2y_1Z+(y_1y_2-Y)z_1X](z_1z_2-Z)}{y_1y_2z_1(x_1y_2+z_2)X-(x_1y_2z_1+z_1z_2-Z)XY+x_2[y_1z_2+x_1(y_1y_2-Y)]Z}\label{NLS-correspondence-d}\\
    w_2&=x_1y_2+z_2,\label{NLS-correspondence-e}\\ W&=\frac{(x_1x_2-X)(z_1z_2-Z)YZ}{y_1y_2z_1(x_1y_2+z_2)X-(x_1y_2z_1+z_1z_2-Z)XY+x_2[y_1z_2+x_1(y_1y_2-Y)]Z}\label{NLS-correspondence-f}
\end{align}
\end{subequations}
which is a correspondence rather than a map; functions $u_2,v_1$ and $V$ are defined in terms of $U$. The above correspondence does not satisfy the tetrahedron equation for any choice of $U$. However, as we shall see below, there is at least a choice of $U$ for which \eqref{NLS-correspondence} defines a tetrahedron map.

In particular, the determinant of equation \eqref{NLS-Lax-Tetra} implies the equation 
$$
(U-u_1u_2)(V-v_1v_2)(W-w_1w_2)=(X_1-x_1x_2)(Y-y_1y_2)(Z-z_1z_2).
$$
We choose $U-u_1u_2=X-x_1x_2$, $V-v_1v_2=Y-y_1y_2$, $W-w_1w_2=Z-z_1z_2$. Then, the following holds.
\begin{proposition}
The system consisting of equation \eqref{Lax-Tetra} together with $U-u_1u_2=X-x_1x_2$  has a unique solution, namely a map 
$
(x_1,x_2,X,y_1,y_2,Y,z_1,z_2,Z)\overset{T}{\longrightarrow} (u_1,u_2,U,v_1,v_2,V,w_1,w_2,W),
$
given by
\begin{subequations}\label{Tet-NLS-9D} 
\begin{align}
x_1\mapsto u_1 &=\frac{x_1(y_1y_2-Y)+y_1z_2}{z_1z_2-Z},\label{Tet-NLS-9D-a}\\
x_2\mapsto u_2 &=\frac{(x_1x_2-X)(y_2z_1X+x_2Z)(z_1z_2-Z)}{y_1y_2z_1(x_1y_2+z_2)X-(x_1y_2z_1+z_1z_2-Z)XY+x_2[y_1z_2+x_1(y_1y_2-Y)]Z},\label{Tet-NLS-9D-b}\\
X\mapsto U &=\frac{(x_1x_2-X)(y_1y_2-Y)X}{y_1y_2z_1(x_1y_2+z_2)X-(x_1y_2z_1+z_1z_2-Z)XY+x_2[y_1z_2+x_1(y_1y_2-Y)]Z},\label{Tet-NLS-9D-c}\\
y_1\mapsto v_1 &=\frac{x_1z_1(y_1y_2-Y)+y_1Z}{(x_1x_2-X)(z_1z_2-Z)},\label{Tet-NLS-9D-d}\\
y_2\mapsto v_2 &=x_2z_2+y_2X,\label{Tet-NLS-9D-e}\\
Y\mapsto V &=\frac{y_1y_2z_1(x_1y_2+z_2)X-(x_1y_2z_1+z_1z_2-Z)XY+x_2[y_1z_2+x_1(y_1y_2-Y)]Z}{(x_1x_2-X)(z_1z_2-Z)},\label{Tet-NLS-9D-f}\\
z_1\mapsto w_1 &=\frac{[x_2y_1Z-z_1(y_1y_2-Y)X](z_1z_2-Z)}{y_1y_2z_1(x_1y_2+z_2)X-(x_1y_2z_1+z_1z_2-Z)XY+x_2[y_1z_2+x_1(y_1y_2-Y)]Z},\label{Tet-NLS-9D-g} \\
z_2\mapsto w_2 &=x_1y_2+z_2,\label{Tet-NLS-9D-h}\\
Z\mapsto W &=\frac{(x_1x_2-X)(z_1z_2-Z)YZ}{y_1y_2z_1(x_1y_2+z_2)X-(x_1y_2z_1+z_1z_2-Z)XY+x_2[y_1z_2+x_1(y_1y_2-Y)]Z}.\label{Tet-NLS-9D-i}
\end{align}
\end{subequations}
Map \eqref{Tet-NLS-9D} is a nine-dimensional noninvolutive tetrahedron map.
\end{proposition}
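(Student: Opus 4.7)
The plan is to close the correspondence \eqref{NLS-correspondence} by the scalar constraint $U - u_1 u_2 = X - x_1 x_2$ to obtain the map, then establish the tetrahedron property via a Lax argument and verify noninvolutivity directly.

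First I would note that the matrix equation \eqref{NLS-Lax-Tetra} already fixes $v_2$ and $w_2$ directly from the $(3,1)$ and $(3,2)$ entries, forces $V = XY/U$ from the $(1,1)$ entry, gives $u_2 = (x_2 Z + y_2 z_1 X) U/(XY)$ from the $(2,1)$ entry, and after a short elimination using the $(1,2), (1,3), (2,2), (2,3)$ entries yields the $U$-independent expression in \eqref{Tet-NLS-9D-a} for $u_1$. The remaining quantities $v_1, w_1, W$ then follow by back-substitution, leaving $U$ as a single gauge-like free parameter. Imposing $U - u_1 u_2 = X - x_1 x_2$ on these expressions reduces to a single linear equation in $U$ whose unique rational solution, once substituted back, produces the explicit formulas in \eqref{Tet-NLS-9D}, simultaneously establishing existence and uniqueness. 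Birationality then follows by running the same derivation on the reversed refactorisation $M_{23}(z)M_{13}(y)M_{12}(x) = M_{12}(u)M_{13}(v)M_{23}(w)$ regarded as a system for $(x,y,z)$ in terms of $(u,v,w)$, which constructs a rational inverse in an entirely analogous manner.

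Next, for the tetrahedron property I would invoke the standard Lax-based argument: promote the $2 \times 2$ matrix $M$ in \eqref{M-NLS} to a $4 \times 4$ embedding on each pair $\{i,j\} \subset \{1,2,3,4\}$, producing six Lax operators in bijection with the edges of the tetrahedron on four vertices. Reducing a fixed initial product of these six operators to its reversed ordering via two distinct sequences of local refactorisations \eqref{NLS-Lax-Tetra}, where each sequence realises one side of \eqref{Tetrahedron-eq}, and invoking the uniqueness of $T$ just established, forces the two compositions of maps to coincide; this is precisely the functional tetrahedron equation for $T$.

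Noninvolutivity is then checked by computing $T \circ T$ at a single generic numerical point and observing that the output differs from the input, which suffices because non-involutivity is a non-vanishing rational condition on the entries. The main obstacle I expect lies in the tetrahedron argument: the two reduction sequences must be matched precisely with the compositions labelled by the triples $123, 145, 246, 356$ in \eqref{Tetrahedron-eq}, which requires careful bookkeeping of how the index pairs $\{i,j\}$ permute under successive moves, although the algebra within each individual move is mechanical.
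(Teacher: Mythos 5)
Your derivation of the map itself follows the paper's route exactly: solve the correspondence \eqref{NLS-correspondence} for $u_1,u_2,U$ together with the normalisation $U-u_1u_2=X-x_1x_2$, then back-substitute to get $v_1,V$ and the rest; and your noninvolutivity check (evaluating $T\circ T$ at a generic point, or equivalently noting that $w_2\circ T-w_2$ is a nonzero rational function) is an acceptable variant of the paper's symbolic computation of $w_2(u_1,\dots,W)$. The problem is the tetrahedron property, which you prove by a Lax/uniqueness argument rather than by direct substitution into \eqref{Tetrahedron-eq} as the paper does, and that argument has a genuine gap. First, the hypothesis it needs is not available: the local refactorisation \eqref{NLS-Lax-Tetra} on its own is \emph{not} uniquely solvable --- the paper stresses that it defines a one-parameter correspondence with $U$ free, and uniqueness is only restored by the auxiliary constraint $U-u_1u_2=X-x_1x_2$, which is a \emph{choice} of splitting of the determinant identity $(U-u_1u_2)(V-v_1v_2)(W-w_1w_2)=(X-x_1x_2)(Y-y_1y_2)(Z-z_1z_2)$ and is not itself a consequence of the matrix equation. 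When you chain four local moves inside the six-fold product over pairs $\{i,j\}\subset\{1,2,3,4\}$, you must verify that this normalisation propagates consistently through both reduction sequences; nothing in your sketch guarantees that.

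Second, and more fundamentally, even granting unique solvability of each local move, equality of the two final six-fold matrix products does not by itself force the two sextuples of arguments to coincide. The standard Lax argument requires, in addition, uniqueness of the \emph{factorisation} of the reversed-ordered six-fold product $L_{34}L_{24}L_{23}L_{14}L_{13}L_{12}$ with prescribed arguments --- a global trifactorisation-type uniqueness that is strictly stronger than the uniqueness of a single local move and that you neither state nor prove. Without it, "the two compositions of maps coincide" does not follow. This is precisely why the paper does not attempt the Lax route and instead verifies \eqref{Tetrahedron-eq} for the explicit map \eqref{Tet-NLS-9D} by direct (computer-assisted) substitution; to repair your proof you would either need to establish the six-fold factorisation uniqueness for the matrices \eqref{M-NLS} with the chosen normalisation, or fall back on the direct verification.
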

\begin{proof}
The system consisting of equations \eqref{NLS-correspondence-a} and equation $U-u_1u_2=X-x_1x_2$ has a unique solution given by \eqref{Tet-NLS-9D-a}--\eqref{Tet-NLS-9D-c}. Moreover, substituting $U$ given by \eqref{Tet-NLS-9D-c} to \eqref{NLS-correspondence-b} and \eqref{NLS-correspondence-c}, we obtain $v_1$ and $V$ given in \eqref{Tet-NLS-9D-d} and \eqref{Tet-NLS-9D-e}, respectively.
The tetrahedron property can be readily verified by substitution to the Tetrahedron equation. Finally, for the involutivity of the map we have 
$$
w_2(u_1,u_2,U,v_1,v_2,V,w_1,w_2,W)=x_1y_2+z_2+\frac{(x_2z_2+y_2X)[x_1(y_1y_2-Y)+y_1z_2]}{z_1z_2-Z}.
$$
That is, $T\circ T\neq\id$. Thus, map \eqref{Tet-NLS-9D} is noninvolutive.
\end{proof}

\subsection{Restriction on invariant leaves: A novel six-dimensional Tetrahedron map}
The existence of first integral \eqref{1stInt} indicates the integrals of map \eqref{Tet-NLS-9D}. We  can restrict the latter to a novel nine-dimensional one on the level sets of these integrals. In particular we have the following.
\begin{theorem}
\begin{enumerate}
    \item[\textbf{1.}] The quantities $\Phi=X-x_1x_2$, $\Psi=Y-y_1y_2$ and $\Omega=Z-z_1z_2$ are invariants of the map \eqref{Tet-NLS-9D}.
    \item[\textbf{2.}] Map \eqref{Tet-NLS-9D} can be restricted to a noninvolutive  parametric six-dimensional tetrahedron map 
$$
(x_1,x_2,y_1,y_2,z_1,z_2)\overset{T_{a,b,c}}{\longrightarrow}(u_1,u_2,v_1,v_2,w_1,w_2),
$$
given by:
\begin{subequations}\label{Tet-NLS} 
\begin{align}
x_1\mapsto u_1 &=\frac{b x_1-y_1z_2}{c};\\
x_2\mapsto u_2 &=\frac{ac[x_2(c+z_1z_2)+y_2z_1(a+x_1x_2)]}{abc-[x_2z_2+y_2(a+x_1x_2)][bx_1z_1-y_1(c+z_1z_2)]},\\
y_1\mapsto v_1 &=\frac{y_1(c+z_1z_2)-bx_1z_1}{ac};\\
y_2\mapsto v_2 &=x_2z_2+y_2(a+x_1x_2);\\
z_1\mapsto w_1 &=\frac{c[bz_1(a+x_1x_2)-x_2y_1(c+z_1z_2)]}{abc-[x_2z_2+y_2(a+x_1x_2)][bx_1z_1-y_1(c+z_1z_2)]};\\ 
z_2\mapsto w_2 &=x_1y_2+z_2.
\end{align}
\end{subequations}
on the invariant leaves
\begin{align}
    A_a&:=\{(x_1,x_2,X)\in\mathbb{C}^3:X=a+x_1x_2\},\quad B_b:=\{(y_1,y_2,Y)\in\mathbb{C}^3:Y=b+y_1y_2\},\nonumber\\ C_c&:=\{(z_1,z_2,Z)\in\mathbb{C}^3:Z=c+z_1z_2\}.\label{inv-leaves}
\end{align}
\item[\textbf{3.}] Map \eqref{Tet-NLS} admits the following invariants:
\begin{equation}\label{invariants}
    I_1=(a+x_1x_2)(b+y_1y_2),\qquad  I_2=(b+y_1y_2)(c+z_1z_2),\qquad I_3=ay_1y_2+(x_2y_1+z_1)(x_1y_2+z_2).
\end{equation}
\end{enumerate}
\end{theorem}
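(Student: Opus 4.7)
The plan is to exploit the matrix structure of the Lax representation \eqref{NLS-Lax-Tetra} and avoid grinding through the explicit rational formulas. Writing
\[
R=M_{23}(z)M_{13}(y)M_{12}(x), \qquad L=M_{12}(u)M_{13}(v)M_{23}(w),
\]
the equation $L=R$ underlies every argument below: any fixed entry, principal minor, or cofactor of the two $3\times 3$ matrices yields an identity between the images $(u,v,w,U,V,W)$ and the pre-images $(x,y,z,X,Y,Z)$ of \eqref{Tet-NLS-9D}.

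For Part 1, invariance of $\Phi=X-x_1x_2$ is immediate from the relation $U-u_1u_2=X-x_1x_2$ imposed to single out the map. For $\Omega=Z-z_1z_2$, I would equate the $(1,1)$-cofactor (the determinant of the bottom-right $2\times 2$ block) of $R$ and of $L$; a short calculation produces $Z-z_1z_2$ from $R$ and $W-w_1w_2$ from $L$. For $\Psi=Y-y_1y_2$, I would take the full determinant of \eqref{NLS-Lax-Tetra} to obtain
\[
(U-u_1u_2)(V-v_1v_2)(W-w_1w_2)=(X-x_1x_2)(Y-y_1y_2)(Z-z_1z_2),
\]
which together with the two previous identities forces $V-v_1v_2=Y-y_1y_2$.

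For Part 2, Part 1 says precisely that if $(x_1,x_2,X)\in A_a$, $(y_1,y_2,Y)\in B_b$, $(z_1,z_2,Z)\in C_c$ then the images $(u_1,u_2,U)$, $(v_1,v_2,V)$, $(w_1,w_2,W)$ belong to the same respective leaves. Substituting $X=a+x_1x_2$, $Y=b+y_1y_2$, $Z=c+z_1z_2$ in \eqref{Tet-NLS-9D} and using $y_1y_2-Y=-b$, etc., the nine-dimensional map collapses to \eqref{Tet-NLS}. The parametric tetrahedron equation \eqref{Par-Tetrahedron-eq} is then automatic: each factor $T^{ijk}$ is the restriction of the 9D map acting on coordinates $i,j,k$, which preserves the three leaf parameters attached to those coordinates, so \eqref{Par-Tetrahedron-eq} is simply \eqref{Tetrahedron-eq} for \eqref{Tet-NLS-9D} restricted to the invariant leaves. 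Noninvolutivity carries over from that of \eqref{Tet-NLS-9D}.

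For Part 3, the matrix strategy handles $I_1$ and $I_2$ immediately. The $(1,1)$-entry of \eqref{NLS-Lax-Tetra} gives $UV=XY$, whence $I_1=XY$ is invariant on the leaves. The $(3,3)$-cofactor equality gives $VW(U-u_1u_2)=YZ(X-x_1x_2)$, which combined with $U-u_1u_2=X-x_1x_2$ yields $VW=YZ$, so $I_2$ is invariant. The main obstacle is $I_3$, which is not itself a product of Lax entries. The key step I would take is to verify the identity
\[
I_3 = R_{1,3}R_{3,1}+R_{2,2}-(Z-z_1z_2)
\]
by direct substitution of $R_{1,3}=y_1$, $R_{3,1}=x_2z_2+y_2X$, $R_{2,2}=Z+x_1y_2z_1$. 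By the same rearrangement, the value of $I_3$ at the image point equals $L_{1,3}L_{3,1}+L_{2,2}-(W-w_1w_2)$, and since $R=L$ entry-by-entry and $W-w_1w_2=Z-z_1z_2$ by Part 1, the two expressions agree. The hard part of the proof is spotting this matrix rewriting of $I_3$; once it is in hand, everything else is a minor computation with $3\times 3$ matrices.
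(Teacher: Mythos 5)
Your proof is correct, but it follows a genuinely different route from the paper's, which establishes all three parts by direct substitution into the explicit rational formulas ("verified by straightforward calculation/substitution"). You instead work structurally with the matrix identity \eqref{NLS-Lax-Tetra}: Part 1 via the $(1,1)$-cofactor (which indeed equals $Z-z_1z_2$ on the right and $W-w_1w_2$ on the left) together with the determinant relation; Part 3 via the $(1,1)$-entry, the $(3,3)$-cofactor, and the rewriting of $I_3$ as $R_{13}R_{31}+R_{22}-\mathrm{cof}_{11}(R)$; and Part 2's tetrahedron property as a formal restriction of the already-established nine-dimensional tetrahedron identity to the product of leaves. This buys a conceptual explanation of where the invariants come from and avoids a second large symbolic verification of the parametric tetrahedron equation; the paper's brute-force check is self-contained but unilluminating. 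Three small points to tighten. First, your leaf-restriction argument for Part 2 is valid precisely because each leaf condition constrains a single factor of $\mathcal{X}^6$, so every $T^{ijk}$ preserves the product of leaves; it is worth saying this explicitly, since the paper later exhibits a restriction (to level sets of invariants coupling different factors, map \eqref{Tet-NLS-Deg-Restr-2}) that destroys the tetrahedron property, and your "automatic" would otherwise look suspect. Second, "by the same rearrangement" for the image value of $I_3$ glosses over the fact that $L$ is the product in the \emph{reverse} order, so $L_{13}=Uv_1+u_1w_1$, $L_{31}=v_2$, $L_{22}=u_2v_1w_2+W$ are composite; the identity $L_{13}L_{31}+L_{22}-(W-w_1w_2)=(U-u_1u_2)v_1v_2+(u_2v_1+w_1)(u_1v_2+w_2)$ does hold, but it is a separate (if short) computation, not the same substitution. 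Third, extracting $V-v_1v_2=Y-y_1y_2$ from the determinant requires dividing by $\Phi\Omega=ac$, and the noninvolutivity of \eqref{Tet-NLS} "carrying over" from \eqref{Tet-NLS-9D} should be justified by noting that the paper's witness $w_2\circ T- w_2$ restricts to a nonzero function on each leaf with $c\neq 0$; both are generic-parameter caveats rather than gaps.
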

\begin{proof}
Regarding \textbf{1}. The existence of these invariants is indicated by the existence of the first integral \eqref{1stInt}. This can be verified by straightforward calculation.

With regards to \textbf{2}, we set $\Phi=a$, $\Psi=b$ and $\Omega=c$. Now, using the conditions $X=a+x_1x_2$, $Y=b+y_1y_2$ and $Z=c+z_1z_2$, we eliminate $X$, $Y$ and $Z$ from  the nine-dimensional map \eqref{Tet-NLS-9D}, and we obtain \eqref{Tet-NLS}. It can be verified by substitution that map \eqref{Tet-NLS} satisfies the parametric tetrahedron equation. For the involutivity check, we have that
$$
w_2(u_1,u_2,v_1,v_2,w_1,w_2)=x_1y_2+z_2+\frac{(bx_1-y_1z_2)[ay_2+x_2(x_1y_2+z_2)]}{c}.
$$
Thus, $T_{a,b,c}\circ T_{a,b,c}\neq \id$, and the map is noninvolutive.

Finally, regarding \textbf{3}, it can be readily proven that $(a+u_1u_2)(b+v_1v_2)\overset{\eqref{Tet-NLS}}{=}(a+x_1x_2)(b+y_1y_2)$, $(b+v_1v_2)(c+w_1w_2)\overset{\eqref{Tet-NLS}}{=}(b+y_1y_2)(c+z_1z_2)$ and $av_1v_2+(u_2v_1+w_1)(u_1v_2+w_2)\overset{\eqref{Tet-NLS}}{=}ay_1y_2+(x_2y_1+z_1)(x_1y_2+z_2)$.
\end{proof}

\begin{corollary}
Map \eqref{Tet-NLS} has the following Lax representation
\begin{equation}\label{NLS-Lax-Tet}
    M_{12}(u_1,u_2;a)M_{13}(v_1,v_2;b)M_{23}(w_1,w_2;c)= M_{23}(z_1,z_2;c)M_{13}(y_1,y_2;b)M_{12}(x_1,x_2;a),
\end{equation}
where the associated matrix is given by
\begin{equation}\label{M-NLS-Tet}
  M(x_1,x_2;a)=\left(
     \begin{array}{cc}
         a+x_1x_2 & x_1\\
         x_2 & 1
     \end{array}\right).
\end{equation}
\end{corollary}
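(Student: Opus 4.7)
The plan is to obtain the Lax representation by restricting the nine-dimensional matrix trifactorisation problem \eqref{NLS-Lax-Tetra} to the invariant leaves $A_a\times B_b\times C_c$ defined in \eqref{inv-leaves}. The key observation is that the matrix \eqref{M-NLS-Tet} is precisely what the matrix \eqref{M-NLS} becomes once the defining condition $X=a+x_1x_2$ of the leaf $A_a$ is substituted into it.

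First, I would recall that, by construction, the nine-dimensional map \eqref{Tet-NLS-9D} satisfies the matrix refactorisation problem \eqref{NLS-Lax-Tetra} for the matrix $M(x_1,x_2,X)$ given in \eqref{M-NLS}. Next, I would invoke Part \textbf{1} of the preceding theorem, which states that $\Phi=X-x_1x_2$, $\Psi=Y-y_1y_2$ and $\Omega=Z-z_1z_2$ are invariants of \eqref{Tet-NLS-9D}. Hence, fixing their values to $a$, $b$, $c$, the conditions
$$
X=a+x_1x_2,\quad Y=b+y_1y_2,\quad Z=c+z_1z_2
$$
are preserved by the map, and we simultaneously have $U=a+u_1u_2$, $V=b+v_1v_2$, $W=c+w_1w_2$. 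In particular, the six-dimensional map \eqref{Tet-NLS} is the honest restriction of \eqref{Tet-NLS-9D} to these leaves.

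Substituting these relations into \eqref{M-NLS}, the matrix $M(x_1,x_2,X)$ becomes
$$
M(x_1,x_2,a+x_1x_2)=\begin{pmatrix} a+x_1x_2 & x_1\\ x_2 & 1\end{pmatrix}=M(x_1,x_2;a),
$$
and analogously for the $y$, $z$, $u$, $v$, $w$ slots with parameters $b$, $c$, $a$, $b$, $c$ respectively. Inserting these substitutions into \eqref{NLS-Lax-Tetra} produces exactly \eqref{NLS-Lax-Tet}, which therefore holds for the six-dimensional map \eqref{Tet-NLS}.

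No serious obstacle is expected: the argument is a direct specialisation of an already-verified identity. The only point requiring a little care is noting that the invariant character of $\Phi,\Psi,\Omega$ is what ensures the matrices on both sides of \eqref{NLS-Lax-Tet} are parametrised consistently (same $a,b,c$ before and after the map), so that \eqref{NLS-Lax-Tet} is genuinely a statement about matrices of the form \eqref{M-NLS-Tet} rather than an artefact of the embedding into nine dimensions. If desired, a purely computational check — multiplying out both sides of \eqref{NLS-Lax-Tet} using \eqref{M-NLS-Tet} and \eqref{Tet-NLS} — gives an independent verification, but it merely reproduces the derivation of \eqref{Tet-NLS-9D} in the restricted variables.
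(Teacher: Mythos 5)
Your argument is correct and is precisely the reasoning the paper leaves implicit in presenting this as a corollary: the nine-dimensional refactorisation problem \eqref{NLS-Lax-Tetra} holds for \eqref{M-NLS}, and the invariance of $\Phi,\Psi,\Omega$ guarantees that on the leaves \eqref{inv-leaves} one may substitute $X=a+x_1x_2$ (and $U=a+u_1u_2$, etc.) into every factor, turning \eqref{M-NLS} into \eqref{M-NLS-Tet} and \eqref{NLS-Lax-Tetra} into \eqref{NLS-Lax-Tet}. No gaps; this matches the paper's approach.
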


\begin{remark}\normalfont
The invariants \eqref{invariants} are not obtained from the trace of the right-hand side of \eqref{NLS-Lax-Tet}. Matrix $M$ in \eqref{M-NLS-Tet} is not symmetric, thus does not fall in the category of maps for which proposition \ref{symm-Map-inv} holds.
\end{remark}

\begin{remark}\normalfont
As mentioned in section \ref{preliminaries} the symmetry break in equation \eqref{NLS-Lax-Tet} does not automatically imply birationality of map \eqref{Tet-NLS}. However, solving equation \eqref{NLS-Lax-Tet} for $(x_1,x_2,y_1,y_2,z_1,z_2)$, one can easily see that map \eqref{Tet-NLS} is birational.
\end{remark}

%\subsubsection{Complete Integrability}
%The existence of three invariants \eqref{invariants} for map \eqref{Tet-NLS} is already a sign of integrability for the latter. However, we are interested in the Liouville integrability of the map. In particular, we have the following.

\subsection{A novel `degenerated' six-dimensional parametric tetrahedron map of NLS type}\label{Degenerated-NLS-map}
Here, we employ the second Darboux matrix of NLS type, namely matrix \eqref{M-degen}, to derive another parametric tetrahedron map. In particular, we change $(f+\lambda,p)\rightarrow (x_1,x_2)$ in \eqref{M-degen}, and define the following matrix
\begin{equation}\label{M-NLS-deg}
  M(x_1,x_2,a)=\left(
     \begin{array}{cc}
         x_1 & x_2\\
         \frac{a}{x_2} & 0
     \end{array}\right).
\end{equation}

For matrix \eqref{M-NLS-deg} we consider the following matrix trifactorisation problem:
\begin{equation}\label{NLS-Deg-Lax-Tetra}
    M_{12}(u_1,u_2,a)M_{13}(v_1,v_2,b)M_{23}(w_1,w_2,c)= M_{23}(z_1,z_2,c)M_{13}(y_1,y_2,b)M_{12}(x_1,x_2,a).
\end{equation}
The above equation implies the system of polynomial equations
\begin{equation}\label{correspondence-6D}
u_1v_1=x_1y_1,~~ u_2w_1+c\frac{u_1v_2}{w_2}=x_2y_1,~~ u_2w_2=y_2,~~ a\frac{v_1}{u_2}=a\frac{z_1}{x_2}+b\frac{x_1z_2}{y_2},~~ ac\frac{v_2}{u_2w_2}=b\frac{x_2z_2}{y_2},~~\frac{b}{v_2}=\frac{ac}{x_2z_2},
\end{equation}
which defines the following correspondence
\begin{equation}\label{corr-NLS-Deg}
u_2=\frac{a x_1 x_2y_1y_2}{u_1(ay_2z_1+bx_1x_2z_2)};~~v_1=\frac{x_1y_1}{u_1};~~ v_2 =\frac{bx_2z_2}{ac};~~w_1 =\frac{z_1}{x_1}u_1;~~ w_2 =\frac{ay_2z_1+bx_1x_2z_2}{ax_1x_2y_1}u_1.
\end{equation}
That is, $u_2,v_1,v_2,w_1$ and $w_2$ depend on $u_1$.

As in the previous section, this correspondence does not define a tetrahedron map for any choice of $u_1$. However, for the choice $u_1=y_1$ we have the following.
\begin{theorem}
The map defined by
\begin{equation}\label{Tet-NLS-Deg} 
(x_1,x_2,y_1,y_2,z_1,z_2)\overset{T_{a,b,c}}{\longrightarrow }\left(y_1,\frac{a x_1 x_2y_2}{ay_2z_1+bx_1x_2z_2},x_1,\frac{bx_2z_2}{ac},\frac{y_1z_1}{x_1},\frac{ay_2z_1+bx_1x_2z_2}{ax_1x_2}\right)
\end{equation}
is a six-dimensional parametric tetrahedron map and it is noninvolutive and birational. Moreover, map \eqref{Tet-NLS-Deg} admits the following invariants:
\begin{equation}\label{invariants-deg} 
    I_1=x_1y_1,\qquad  I_2=x_1+y_1,\qquad I_3=y_1z_1.
\end{equation}
\end{theorem}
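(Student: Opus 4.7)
The plan is first to extract the map from the correspondence and then verify the four claims (the tetrahedron property, noninvolutivity, birationality, and the invariants) in increasing order of difficulty. Setting $u_1=y_1$ in the correspondence \eqref{corr-NLS-Deg} immediately collapses the system into the explicit assignments displayed in \eqref{Tet-NLS-Deg}. Three of the remaining claims reduce to short direct computations, while the tetrahedron identity is the only genuinely heavy check.

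For the invariants, a one-line verification suffices: $u_1v_1=y_1x_1$, $u_1+v_1=y_1+x_1$, and $v_1w_1=x_1\cdot(y_1z_1)/x_1=y_1z_1$, so $I_1$, $I_2$, $I_3$ are preserved by $T_{a,b,c}$. Noninvolutivity is quickest to see through the telescoping identity
\[
u_2w_2=\frac{ax_1x_2y_2}{ay_2z_1+bx_1x_2z_2}\cdot\frac{ay_2z_1+bx_1x_2z_2}{ax_1x_2}=y_2,
\]
so the fourth coordinate of $T_{a,b,c}\circ T_{a,b,c}$ becomes $b\,u_2w_2/(ac)=by_2/(ac)$, which is not identically $y_2$. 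For birationality, the inverse is read off triangularly: $y_1=u_1$ and $x_1=v_1$ directly; $z_1=v_1w_1/u_1$ from the fifth component; $y_2=u_2w_2$ by the identity above; then substituting $z_2=acv_2/(bx_2)$ from the fourth component into the sixth gives a linear equation for $x_2$ whose solution is rational in $(u_1,u_2,v_1,v_2,w_1,w_2)$, and finally $z_2$ is rational as well.

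The main obstacle is the parametric tetrahedron identity \eqref{Par-Tetrahedron-eq}. I would verify it by direct substitution: unpacking both sides produces two six-tuples of rational functions in $x_1,x_2,y_1,y_2,z_1,z_2$ and the parameters $a,b,c,d,e,f$, whose componentwise equality reduces after clearing denominators to a polynomial identity that can be confirmed mechanically with a computer algebra system. A more conceptual alternative is to lift the property to associativity of matrix multiplication via the refactorisation \eqref{NLS-Deg-Lax-Tetra}: if the auxiliary constraint $u_1=y_1$ propagates consistently through both composition sequences in \eqref{Par-Tetrahedron-eq}, then the tetrahedron property of $T_{a,b,c}$ follows from the well-definedness of the triple product of $M_{ij}(\cdot,\cdot,\cdot)$ matrices on each of the four cube faces. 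Either route produces the same conclusion, but the symbolic check is the cleanest way to close the proof.
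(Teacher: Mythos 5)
Your proposal matches the paper's proof in essentially every respect: the map is obtained by fixing the free variable $u_1=y_1$ in the correspondence \eqref{corr-NLS-Deg}, noninvolutivity is detected through $v_2\circ T_{a,b,c}=\frac{by_2}{ac}\neq y_2$, birationality via an explicit rational inverse, the invariants by one-line substitutions (and your check $v_1w_1=y_1z_1$ is in fact the correct verification of $I_3$), and the tetrahedron identity by direct symbolic substitution. One caution: your ``conceptual alternative'' for the tetrahedron property is not sound as stated, because the trifactorisation \eqref{NLS-Deg-Lax-Tetra} is not unique --- the correspondence has a free variable, and the paper's very next subsection exhibits another choice ($u_2=x_2$) whose restriction fails the tetrahedron equation --- so well-definedness of the matrix triple products cannot by itself deliver the result; since you ultimately rely on the direct check, this does not affect the validity of your argument.
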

\begin{proof}
Map \eqref{Tet-NLS-Deg} follows after substitution of $u_2=x_2$ to \eqref{corr-NLS-Deg}. The tetrahedron property can be readily verified by substitution to the tetrahedron equation. 

Now, since $v_2\circ T_{a,b,c}=\frac{by_2}{a c}\neq y_2$, it follows that $T_{a,b,c}\circ T_{a,b,c}\neq\id$, and therefore the map is noninvolutive. Additionally, the inverse of map \eqref{Tet-NLS-Deg} is given by
$$
(u_1,u_2,v_1,v_2,w_1,w_2)\overset{T_{a,b,c}^{-1}}{\longrightarrow }\left(v_1,\frac{u_2w_1w_2+cu_1v_2}{u_1w_2},u_1,u_2w_2,\frac{v_1w_1}{u_1},\frac{acu_1v_2w_2}{bcu_1v_2+bu_2w_1w_2)}\right),
$$
namely \eqref{Tet-NLS-Deg} is birational.

Finally, we have
$$
u_1v_1\overset{\eqref{Tet-NLS-Deg}}{=} x_1y_1,\quad u_1+v_1\overset{\eqref{Tet-NLS-Deg}}{=} x_1+y_1,\quad\text{and}\quad u_1w_1\overset{\eqref{Tet-NLS-Deg}}{=}x_1z_1,
$$
i.e. map \eqref{Tet-NLS-Deg} admits the invariants $I_i$, $i=1,2,3$, given by \eqref{invariants-deg}.
\end{proof}

%\subsubsection{Complete integrability}
%Again, the existence of three invariants \eqref{invariants-deg} for the six-dimensional map \eqref{Tet-NLS-Deg} is already a sign of integrability. For its Liouville integrability, we have the following.

\subsubsection{Restriction on the level sets of the invariants}
Here, we restrict the map \eqref{Tet-NLS-Deg} on the level sets of its invariants \eqref{invariants-deg}. In particular, we have the following.

\begin{proposition}
Map \eqref{Tet-NLS-Deg} can be restricted to a three-dimensional noninvolutive parametric map given by
\begin{equation}\label{Tet-NLS-Deg-Restr}
    (x,y,z)\overset{T_{a,b,c}}{\longrightarrow}\left(\frac{axy}{ay+bxz},\frac{bxz}{ac},\frac{ay+bxz}{ax}\right)
\end{equation}
\end{proposition}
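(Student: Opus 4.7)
The plan is to identify an explicit three-dimensional submanifold preserved by $T_{a,b,c}$ on which the induced dynamics coincides with \eqref{Tet-NLS-Deg-Restr}. Reading off from \eqref{Tet-NLS-Deg} how the ``first'' coordinates transform, one sees that
\[
T_{a,b,c} : (x_1,y_1,z_1) \longmapsto \left(y_1,\, x_1,\, \frac{y_1 z_1}{x_1}\right),
\]
which fixes the point $(1,1,1)$. This point corresponds to the specific level set $I_1 = 1$, $I_2 = 2$, $I_3 = 1$ of the invariants \eqref{invariants-deg}, since the constraints $x_1+y_1=2$, $x_1y_1=1$ force $x_1=y_1=1$ and then $y_1 z_1=1$ forces $z_1=1$. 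The corresponding leaf is thus the three-dimensional submanifold $\{x_1=y_1=z_1=1\}$, parametrised by $(x_2,y_2,z_2)$, and it is preserved by $T_{a,b,c}$.

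Next I would substitute $x_1=y_1=z_1=1$ into the formulas of \eqref{Tet-NLS-Deg} and rename $(x_2,y_2,z_2) = (x,y,z)$. The first-coordinate images reduce trivially to $1$, while the images $u_2, v_2, w_2$ collapse to
\[
\frac{axy}{ay+bxz}, \qquad \frac{bxz}{ac}, \qquad \frac{ay+bxz}{ax},
\]
which is exactly the right-hand side of \eqref{Tet-NLS-Deg-Restr}. For noninvolutivity, writing $(u,v,w) = T_{a,b,c}(x,y,z)$ one observes that $uw = y$, hence the second component of $T_{a,b,c}(u,v,w)$ equals $\frac{buw}{ac} = \frac{by}{ac}$, which differs from $y$ in general; therefore $T_{a,b,c}\circ T_{a,b,c} \neq \id$, mirroring the noninvolutivity argument already used for \eqref{Tet-NLS-Deg}.

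No step presents a real obstacle; the only subtle point is recognising that the specific joint level set $I_1=1,\,I_2=2,\,I_3=1$ collapses onto the single pointwise-fixed leaf $\{x_1=y_1=z_1=1\}$, after which the proof reduces to a direct substitution followed by a one-line noninvolutivity check.
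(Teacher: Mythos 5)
Your proof is correct and follows essentially the same route as the paper: fix the level set $I_1=1$, $I_2=2$, $I_3=1$ (forcing $x_1=y_1=z_1=1$), substitute into \eqref{Tet-NLS-Deg}, relabel, and check noninvolutivity via the second component, exactly as in the paper's argument. Your additional remark that the first-coordinate subsystem $(x_1,y_1,z_1)\mapsto(y_1,x_1,y_1z_1/x_1)$ is closed and fixes $(1,1,1)$ is a pleasant extra justification that the leaf is genuinely preserved, but it does not alter the substance of the proof.
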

\begin{proof}
Setting $I_1=1, I_2=2$ and $I_3=1$, where $I_i$, $i=1,2,3$, are given by \eqref{invariants-deg}, and solving for $x_1,y_1,z_1$ we obtain $x_1=y_1=z_1=1$. We substitute to \eqref{Tet-DNLS-Deg}, and we obtain a map $x_2\mapsto u_2(x_2,y_2,z_2)$, $y_2\mapsto v_2(x_2,y_2,z_2)$ and $z_2\mapsto w_2(x_2,y_2,z_2)$. After relabelling $(x_2,y_2,z_2,u_2,v_2,w_2)\rightarrow (x,y,z,u,v,w)$, we obtain map \eqref{Tet-NLS-Deg-Restr}. Noninvolutivity of the map follows from the fact that, for instance, $(v\circ T_{a,b,c})(x,y,z)=\frac{by}{ac}\neq y$, thus $T_{a,b,c}\circ T_{a,b,c}\neq\id$. 
\end{proof}

\begin{remark}\normalfont
From the above parametric tetrahedron map, one can obtain map (20) in Sergeev's classification \cite{Sergeev} (and also in \cite{Kashaev-Sergeev}), considering the limit $b\rightarrow a$.
\end{remark}

\subsection{A novel six-dimensional parametric tetrahedron map which does not restrict to a tetrahedron map on the level sets of its invariants}
In this section, we demonstrate by an example that a restriction of a tetrahedron map on invariant leaves is not necessarily a tetrahedron map.

In section \ref{Degenerated-NLS-map}, we saw that for the choice $u_1=y_1$, the correspondence \eqref{corr-NLS-Deg} defines the six-dimensional tetrahedron map \eqref{Tet-NLS-Deg} which can be restricted to the parametric three-dimensional tetrahedron map \eqref{Tet-NLS-Deg-Restr}. As mentioned earlier, the correspondence \eqref{correspondence-6D} does not define a parametric tetrahedron map for any choice of the free variable. And even when a six-dimensional tetrahedron map is defined, this does not necessarily imply that its three-dimensional restriction on invariant leaves has the tetrahedron property. To demonstrate this, we express $u_1, v_1, v_2, w_1$ and $w_2$ in terms of $u_2$ in \eqref{correspondence-6D}.

Specifically, the choice $u_2=x_2$ in \eqref{correspondence-6D} implies the following. 

\begin{theorem}
The map defined by
\begin{equation}\label{Tet-NLS-Deg-2} 
(x_1,x_2,y_1,y_2,z_1,z_2)\overset{T_{a,b,c}}{\longrightarrow }\left(\frac{a x_1y_1y_2}{ay_2z_1+bx_1x_2z_2},x_2,z_1+\frac{bx_1x_2z_2}{ay_2},\frac{bx_2z_2}{ac},\frac{ay_1y_2z_1}{ay_2z_1+bx_1x_2z_2},\frac{y_2}{x_2}\right)
\end{equation}
is a six-dimensional parametric tetrahedron map and it is noninvolutive and birational. Moreover, map \eqref{Tet-NLS-Deg-2} admits the following invariants:
\begin{equation}\label{invariants-deg-2}
    I_1=x_1y_1,\qquad  I_2=x_2,\qquad I_3=\frac{x_1}{z_1}.
\end{equation}
\end{theorem}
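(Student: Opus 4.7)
The plan is to proceed in direct parallel with the proof of the theorem for map \eqref{Tet-NLS-Deg}, since the starting point is the same correspondence \eqref{correspondence-6D} with a different free choice. First I would observe that substituting $u_2=x_2$ into \eqref{correspondence-6D} turns the system of polynomial equations into a determined one for $u_1,v_1,v_2,w_1,w_2$. Indeed, from the first relation of \eqref{corr-NLS-Deg} one solves $u_1=\frac{a x_1 y_1 y_2}{ay_2z_1+bx_1x_2z_2}$, and the remaining formulas in \eqref{corr-NLS-Deg} give $v_1$, $v_2$, $w_1$, $w_2$ exactly as listed in \eqref{Tet-NLS-Deg-2}. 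This is a routine algebraic check.

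Next I would verify the parametric tetrahedron property \eqref{Par-Tetrahedron-eq}: this amounts to composing the eight maps $T^{ijk}_{\cdot,\cdot,\cdot}$ in the two prescribed orders and confirming that the results agree as rational functions of $(x_1,x_2,y_1,y_2,z_1,z_2,r_1,r_2,s_1,s_2,t_1,t_2)$ and of the six parameters $a,b,c,d,e,f$. This is the main technical obstacle; it is conceptually straightforward but calculationally heavy, and I would delegate it to a computer-algebra substitution, as is standard for such parametric tetrahedron identities. Alternatively, since the map arises from the trifactorisation problem \eqref{NLS-Deg-Lax-Tetra}, one can trace through the same Lax representation used for \eqref{Tet-NLS-Deg} and invoke the fact that, on the slice $u_2=x_2$, the resulting map agrees with the factorisation; this gives a cleaner conceptual route but the same amount of bookkeeping.

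For the remaining claims I would proceed as follows. Noninvolutivity follows by checking a single component: for example $v_2\circ T_{a,b,c}=\frac{b\,u_2 w_2}{ac}=\frac{b y_2}{ac}\neq y_2$ in general, so $T_{a,b,c}\circ T_{a,b,c}\neq \id$. Birationality is established by writing \eqref{NLS-Deg-Lax-Tetra} and solving the same system for $(x_1,x_2,y_1,y_2,z_1,z_2)$ in terms of $(u_1,u_2,v_1,v_2,w_1,w_2)$; since $u_2=x_2$ is built into the map, the symmetric choice $w_2=y_2/x_2$ (or rather, $x_1=u_1 v_1 w_2/v_2 \cdot (\text{polynomial factors})$) furnishes the inverse, giving rational formulas analogous to those in the previous subsection.

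Finally, the three invariants in \eqref{invariants-deg-2} are verified by direct substitution: $u_1 v_1=\frac{a x_1 y_1 y_2}{ay_2z_1+bx_1x_2z_2}\cdot\frac{ay_2z_1+bx_1x_2z_2}{ay_2}=x_1 y_1$, $u_2=x_2$ holds by construction, and $\frac{u_1}{w_1}=\frac{x_1}{z_1}$ since the common denominator cancels between the formulas for $u_1$ and $w_1$. Each of these is a one-line check and presents no difficulty. The only serious obstacle in the whole argument is therefore the parametric tetrahedron identity, for which I would rely on symbolic computation, exactly as in the preceding proofs of this section.
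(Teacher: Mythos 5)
Your proposal is correct and follows essentially the same route as the paper: solve the correspondence \eqref{correspondence-6D} with the free choice $u_2=x_2$ (you do this via \eqref{corr-NLS-Deg} by solving for $u_1$, the paper via the equivalent reparametrisation \eqref{corr-NLS-Deg-2}; both yield \eqref{Tet-NLS-Deg-2}), verify the parametric tetrahedron identity by direct symbolic substitution, exhibit noninvolutivity on a single component (you use $v_2\circ T=\tfrac{by_2}{ac}$, the paper uses $v_1$; either suffices), invert the trifactorisation for birationality, and check the invariants \eqref{invariants-deg-2} by one-line computations. No gaps.
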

\begin{proof}
Expressing $u_1, v_1, v_2, w_1$ and $w_2$ in terms of $u_2$ in \eqref{correspondence-6D},  we obtain 
\begin{equation}\label{corr-NLS-Deg-2}
u_1=\frac{a x_1 x_2y_1y_2}{u_2(ay_2z_1+bx_1x_2z_2)};~~v_1=u_2(\frac{z_1}{x_2}+\frac{bx_1z_2}{ay_2});~~ v_2 =\frac{bx_2z_2}{ac};~~w_1 =\frac{ax_2y_1y_2z_1}{u_2(ay_2z_1+bx_1x_2z_2)};~~ w_2 =\frac{y_2}{u_2}.
\end{equation}
Map \eqref{Tet-NLS-Deg-2} follows after substitution of $u_2=x_2$ to \eqref{corr-NLS-Deg-2}. The tetrahedron property can be readily verified by substitution to the tetrahedron equation. 
Now, since $v_1(u_1,u_2,v_1,v_2,w_1,w_2)=\frac{ay_1y_2(cx_1y_2+x_2z_1z_2)}{x_2z_2(ay_2z_1+bx_1x_2z_2)}\neq y_1$, it follows that $T_{a,b,c}\circ T_{a,b,c}\neq\id$, and therefore the map is noninvolutive. Additionally, the inverse of map \eqref{Tet-NLS-Deg-2} is given by
$$
(u_1,u_2,v_1,v_2,w_1,w_2)\overset{T_{a,b,c}^{-1}}{\longrightarrow }(\frac{u_1u_2v_1w_2}{cu_1v_2+u_2w_1w_2},u_2,w_1+c\frac{u_1v_2}{u_2w_2},u_2w_2,\frac{u_2v_1w_1w_2}{cu_1v_2+u_2w_1w_2},\frac{ac v_2}{b u_2}),
$$
namely \eqref{Tet-NLS-Deg-2} is birational.

Finally, we have
$$
u_1v_1\overset{\eqref{Tet-NLS-Deg-2}}{=} x_1y_1,\quad u_2\overset{\eqref{Tet-NLS-Deg}}{=} x_2,\quad\text{and}\quad \frac{u_1}{w_1}\overset{\eqref{Tet-NLS-Deg-2}}{=}\frac{x_1}{z_1},
$$
i.e. map \eqref{Tet-NLS-Deg-2} admits the invariants $I_i$, $i=1,2,3$, given by \eqref{invariants-deg-2}.
\end{proof}

\subsection{Restriction on the level sets of the invariants}
Here, we restrict the map \eqref{Tet-NLS-Deg-2} on the level sets of its invariants \eqref{invariants-deg}. In particular, we have the following.

\begin{proposition}
Map \eqref{Tet-NLS-Deg-2} can be restricted to a three-dimensional noninvolutive parametric map given by
\begin{equation}\label{Tet-NLS-Deg-Restr-2}
    (x,y,z)\overset{T_{a,b,c}}{\longrightarrow}\left(\frac{ay+bz}{axy},\frac{bz}{ac},y\right)
\end{equation}
\end{proposition}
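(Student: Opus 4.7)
The plan is to mirror the proof of the analogous restriction in subsection~4.3: use the invariants \eqref{invariants-deg-2} to cut the six-dimensional map \eqref{Tet-NLS-Deg-2} down to a three-dimensional one on a common level set, and then verify noninvolutivity by a single composition.

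First I would fix the level set by choosing $I_1=I_2=I_3=1$ in \eqref{invariants-deg-2}. The resulting conditions $x_1y_1=1$, $x_2=1$ and $x_1/z_1=1$ give $x_1=1/y_1$, $z_1=1/y_1$ and $x_2=1$, leaving $y_1,y_2,z_2$ as independent coordinates on the 3D leaf. In contrast to the reduction of \eqref{Tet-NLS-Deg} in subsection~4.3, the three invariants do not uniquely determine $(x_1,y_1,z_1)$ here, so I would keep $y_1$ as the free coordinate on the leaf instead of collapsing it to a constant.

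Substituting these relations into \eqref{Tet-NLS-Deg-2} produces the three image components that are themselves free on the new leaf,
\[
y_1\mapsto v_1=\frac{ay_2+bz_2}{a y_1 y_2},\qquad y_2\mapsto v_2=\frac{bz_2}{ac},\qquad z_2\mapsto w_2=y_2,
\]
while the remaining images $u_1,u_2,w_1$ are forced back onto the same leaf because $u_1v_1=I_1=1$, $u_2=I_2=1$ and $u_1/w_1=I_3=1$ by the invariance established in the previous theorem. Relabelling $(y_1,y_2,z_2)\to(x,y,z)$ reproduces exactly \eqref{Tet-NLS-Deg-Restr-2}. Noninvolutivity then follows by inspecting the middle component of $T_{a,b,c}\circ T_{a,b,c}$: one has $v\circ T_{a,b,c}(x,y,z)=bw/(ac)=by/(ac)$, which is generically distinct from $y$, so $T_{a,b,c}\circ T_{a,b,c}\neq\id$.

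The main subtlety, and where I would take most care, is precisely the non-uniqueness in the first step: the three conditions $I_i=1$ do not pin down $(x_1,y_1,z_1)$ completely, so $y_1$ has to be chosen as the free coordinate by hand. A cleaner coordinate-invariant presentation would keep $I_1,I_3$ generic and parametrize the leaf by $(y_1,y_2,z_2)$ with $x_1=I_1/y_1$, $z_1=I_1/(I_3y_1)$, $x_2=I_2$; the resulting map is conjugate to \eqref{Tet-NLS-Deg-Restr-2} after absorbing the invariant values into the parameters $a,b,c$.
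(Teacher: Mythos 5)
Your proposal is correct and follows essentially the same route as the paper: set $I_1=I_2=I_3=1$, solve for $x_1=z_1=1/y_1$, $x_2=1$, substitute into \eqref{Tet-NLS-Deg-2}, relabel $(y_1,y_2,z_2)\to(x,y,z)$, and check noninvolutivity on one component of $T_{a,b,c}\circ T_{a,b,c}$. Your added remark that the level-set conditions leave $y_1$ as the free coordinate (unlike the reduction of \eqref{Tet-NLS-Deg}) is accurate but does not change the argument, which matches the paper's.
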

\begin{proof}
Setting $I_1=I_2=I_3=1$, where $I_i$, $i=1,2,3$, are given by \eqref{invariants-deg}, and solving for $x_1,x_2,z_1$ we obtain $x_1=z_1=\frac{1}{y_1}$ and $x_2=1$. We substitute to \eqref{Tet-NLS-Deg-2}, and we obtain a map $y_1\mapsto v_1(y_1,y_2,z_2)$, $y_2\mapsto v_2(y_1,y_2,z_2)$ and $z_2\mapsto w_2(y_1,y_2,z_2)$. After relabelling $(y_1,y_2,z_2,v_1,v_2,w_2)\rightarrow (x,y,z,u,v,w)$, we obtain map \eqref{Tet-NLS-Deg-Restr-2}. The noninvolutivity of the maps follows from the fact that, for instance, $(w\circ T_{a,b,c})(x,y,z)=\frac{bz}{ac}\neq z$, thus $T_{a,b,c}\circ T_{a,b,c}\neq\id$. 
\end{proof}

\begin{remark}\normalfont
Map \eqref{Tet-NLS-Deg-Restr-2} does not satisfy the parametric tetrahedron equation.
\end{remark}

\section{Derivative nonlinear Schr\"odinger type tetrahedron maps}
In this section, we study matrix trifactorisation problems \eqref{Lax-Tetra} for the Darboux transformations \eqref{DT-sl2-gen} and \eqref{DT-sl2-degen} associated to the DNLS equation.
\subsection{A novel nine-dimensional tetrahedron map}
Changing $(\lambda^{-1}p,\lambda^{-1}\tilde{q},\lambda^2 f)\rightarrow (x_1,x_2,X)$ in \eqref{DM-NLS}, we define the following matrix
\begin{equation}\label{M-DNLS}
  M(x_1,x_2,X)=\left(
     \begin{array}{cc}
         x_1 & x_1X\\
         x_2X & 1
     \end{array}\right).
\end{equation}
For matrix \eqref{M-DNLS} we consider the matrix trifactorisation problem \eqref{Lax-Tetra}, i.e.
\begin{equation}\label{DNLS-Lax-Tetra}
    M_{12}(u_1,u_2,U)M_{13}(v_1,v_2,V)M_{23}(w_1,w_2,W)= M_{23}(z_1,z_2,Z)M_{13}(y_1,y_2,Y)M_{12}(x_1,x_2,X).
\end{equation}
The above matrix equation implies the following correspondence:
\begin{subequations}\label{DNLS-correspondence}
\begin{align}
    u_1&=\frac{x_1X(y_1y_2Y-1)+y_1z_2Z}{z_1z_2Z-1}\frac{Y}{ZU},\quad u_2=(x_2+y_2z_1)\frac{Z}{Y},\label{DNLS-correspondence-a}\\
    v_1&=\frac{y_1+x_1z_1X(y_1y_2Y-1)}{1+x_2y_1z_2Z+(y_1y_2Y-1)[x_1(x_2+y_2z_1Y)X+z_1z_2 Z]}\frac{1}{X},\label{DNLS-correspondence-c}\\
    v_2&=(y_2+x_2z_2\frac{Z}{Y})U,\quad V=\frac{XY}{U},\quad w_1=\frac{x_2y_1+z_1(y_1y_2Y-1)}{x_1x_2X-1},\label{DNLS-correspondence-d}\\
    w_2&=(x_1y_2XY+z_2Z)\frac{1+x_2y_1z_2Z+(y_1y_2Y-1)[x_1(x_2+y_2z_1Y)X+z_1z_2 Z]}{(x_1x_2X-1)(z_1z_2Z-1)},\label{DNLS-correspondence-f}\\ \
    W&=\frac{(x_1x_2X-1)(z_1z_2Z-1)Z}{1+x_2y_1z_2Z+(y_1y_2Y-1)[x_1(x_2+y_2z_1Y)X+z_1z_2 Z]}.\label{DNLS-correspondence-g}
\end{align}
\end{subequations}
The elements $u_1$, $v_2$ and $V$ are expressed in terms of $U$. However, there is at least a choice of $U$ for which \eqref{NLS-correspondence} defines a tetrahedron map.

The determinant of equation \eqref{DNLS-Lax-Tetra} implies the relation
$$
(U-u_1u_2U^2)(V-v_1v_2V^2)(W-w_1w_2W^2)=(X-x_1x_2X^2)(Y-y_1y_2Y^2)(Z-z_1z_2Z^2)
$$
We choose $U-u_1u_2U^2=X-x_1x_2X^2$, $V-v_1v_2V^2=Y-y_1y_2Y^2$ and $W-w_1w_2W^2=Z-z_1z_2Z^2$. Then, the following holds.

\begin{proposition}
The system consisting of equation \eqref{DNLS-Lax-Tetra} together with $U-u_1u_2U^2=X-x_1x_2X^2$  has a unique solution given by a map 
$$
(x_1,x_2,X,y_1,y_2,Y,z_1,z_2,Z)\overset{T}{\longrightarrow}(u_1,u_2,U,v_1,v_2,V,w_1,w_2,W)
$$
defined by
\begin{subequations}\label{Tet-DNLS-9D}
\begin{align}
x_1\mapsto u_1 &=\frac{1+x_2y_1z_2Z+(y_1y_2Y-1)[x_1(x_2+y_2z_1Y)X+z_1z_2 Z]}{(x_1x_2X-1)X(z_1z_2Z-1)^2Z}\cdot[x_1X(y_1y_2Y-1)+y_1z_2Z]Y,\label{Tet-DNLS-9D-a}\\
x_2\mapsto u_2&=(x_2+y_2z_1)\frac{Z}{Y},\label{Tet-DNLS-9D-b}\\
X\mapsto U &= \frac{(x_1x_2X-1)X(z_1z_2Z-1)}{1+x_2y_1z_2Z+(y_1y_2Y-1)[x_1(x_2+y_2z_1Y)X+z_1z_2 Z]},\label{Tet-DNLS-9D-c}\\
y_1\mapsto v_1&=\frac{y_1+x_1z_1X(y_1y_2Y-1)}{1+x_2y_1z_2Z+(y_1y_2Y-1)[x_1(x_2+y_2z_1Y)X+z_1z_2 Z]}\frac{1}{X},\label{Tet-DNLS-9D-d}\\
y_2\mapsto v_2 &=\frac{(x_1x_2X-1)X(y_2Y+x_2z_2Z)(z_1z_2Z-1)}{1+x_2y_1z_2Z+(y_1y_2Y-1)[x_1(x_2+y_2z_1Y)X+z_1z_2 Z]}\frac{1}{Y},\label{Tet-DNLS-9D-e}\\
Y\mapsto V &=\frac{1+x_2y_1z_2Z+(y_1y_2Y-1)[x_1(x_2+y_2z_1Y)X+z_1z_2 Z]}{(x_1x_2X-1)(z_1z_2Z-1)Z}Y,\label{Tet-DNLS-9D-f}\\
z_1\mapsto w_1&=\frac{x_2y_1+z_1(y_1y_2Y-1)}{x_1x_2X-1},\label{Tet-DNLS-9D-g}\\
z_2\mapsto w_2&=(x_1y_2XY+z_2Z)\frac{1+x_2y_1z_2Z+(y_1y_2Y-1)[x_1(x_2+y_2z_1Y)X+z_1z_2 Z]}{(x_1x_2X-1)(z_1z_2Z-1)},\label{Tet-DNLS-9D-h}\\
z\mapsto W&=\frac{(x_1x_2X-1)(z_1z_2Z-1)}{1+x_2y_1z_2Z+(y_1y_2Y-1)[x_1(x_2+y_2z_1Y)X+z_1z_2 Z]}.\label{Tet-DNLS-9D-i}
\end{align}
\end{subequations}
Map \eqref{Tet-DNLS-9D} is a nine-dimensional birational tetrahedron map.
\end{proposition}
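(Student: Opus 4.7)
The plan is to mirror the proof of the NLS-type Proposition in Section~4.1. First, I expand the matrix identity \eqref{DNLS-Lax-Tetra} entry-by-entry to obtain the scalar equations that yield the correspondence \eqref{DNLS-correspondence}, in which $u_2, w_1, w_2, W$ are already uniquely determined by the remaining variables while $u_1, v_1, v_2, V$ are expressed rationally in terms of the single free parameter $U$.

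Second, I impose the constraint $U-u_1u_2U^2=X-x_1x_2X^2$, which is one of three factors arising from $\det$ of both sides of \eqref{DNLS-Lax-Tetra}. Substituting the expressions for $u_1$ and $u_2$ from \eqref{DNLS-correspondence-a}, one factor of $U$ cancels in $u_1u_2U^2$, and what remains is a \emph{linear} equation in $U$. Its unique solution is precisely \eqref{Tet-DNLS-9D-c}. Back-substituting this $U$ into the expressions for $u_1, v_1, v_2, V$ in \eqref{DNLS-correspondence} produces formulas \eqref{Tet-DNLS-9D-a}, \eqref{Tet-DNLS-9D-d}, \eqref{Tet-DNLS-9D-e} and \eqref{Tet-DNLS-9D-f}, while $u_2, w_1, w_2, W$ are inherited directly from \eqref{DNLS-correspondence}. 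This simultaneously establishes existence and uniqueness of the map \eqref{Tet-DNLS-9D}.

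Third, the tetrahedron property \eqref{Tetrahedron-eq} is checked by direct substitution: both compositions $T^{123}\circ T^{145}\circ T^{246}\circ T^{356}$ and $T^{356}\circ T^{246}\circ T^{145}\circ T^{123}$ are evaluated on a generic point of $\mathcal{X}^6$ and shown to coincide as rational endomorphisms. This is the computationally heaviest step and is most conveniently handled with a computer algebra system by clearing denominators and comparing the resulting polynomial numerators.

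Finally, for birationality, I invert the map by applying the same procedure to \eqref{DNLS-Lax-Tetra} read in the opposite direction: solving for $(x_1,x_2,X,y_1,y_2,Y,z_1,z_2,Z)$ given $(u_1,u_2,U,v_1,v_2,V,w_1,w_2,W)$ using the companion determinantal constraint $W-w_1w_2W^2=Z-z_1z_2Z^2$, which is again linear in $Z$. The resulting inverse is rational, which yields birationality of \eqref{Tet-DNLS-9D}. The principal obstacle throughout is the sheer size of the rational expressions in nine variables, particularly in the tetrahedron-equation verification, where one must exercise care to avoid spurious denominator zeros when cross-multiplying.
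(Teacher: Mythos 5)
Your proposal is correct and follows essentially the same route as the paper: solve the correspondence coming from the trifactorisation, use the determinantal constraint $U-u_1u_2U^2=X-x_1x_2X^2$ (which is indeed linear in $U$ after the cancellation you describe) to fix $U$ uniquely, back-substitute, and verify the tetrahedron property by direct (computer-assisted) substitution. Your explicit treatment of birationality via the reversed refactorisation is a welcome addition, since the paper's proof only records noninvolutivity at that point and defers birationality to a later remark.
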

\begin{proof}
The system consisting of equations \eqref{DNLS-correspondence-a} and $(1-u_1u_2U)U=(1-x_1x_2X)X$ has a unique solution given by \eqref{Tet-DNLS-9D-a}, \eqref{Tet-DNLS-9D-b} and \eqref{Tet-DNLS-9D-c}. Now substituting $U$ from \eqref{Tet-DNLS-9D-c} to the first two equations of \eqref{DNLS-correspondence-d}, we obtain $v_2$ and $V$ given by \eqref{Tet-DNLS-9D-e} and \eqref{Tet-DNLS-9D-f}, respectively. 

It can be readily verified that map \eqref{Tet-DNLS-9D} is a tetrahedron map by substitution to the tetrahedron equation. Finally, we have $T\circ T\neq\id$. Thus, map \eqref{Tet-DNLS-9D} is noninvolutive.
\end{proof}

\subsection{Restriction on invariant leaves: A novel six-dimensional Tetrahedron map}

The existence of first integral \eqref{sl2-D-con-det-gen} indicates the integrals of map \eqref{Tet-DNLS-9D}. We  can restrict the latter to a novel nine-dimensional one on the level sets of these integrals. In particular, we have the following.
\begin{theorem}
\begin{enumerate}
    \item[\textbf{1.}] The quantities $\Phi=X-x_1x_2X^2$, $\Psi=Y-y_1y_2Y^2$ and $\Omega=Z-z_1z_2Z^2$ are invariants of the map \eqref{Tet-DNLS-9D}.
    \item[\textbf{2.}] Map \eqref{Tet-DNLS-9D} can be restricted to a six-dimensional parametric tetrahedron map
$$
(x_1,X,y_1,Y,z_1,Z)\overset{T_{a,b,c}}{\longrightarrow}(u_1,U,v_1,V,w_1,W),
$$
given by
\begin{subequations}\label{Tet-DNLS}
\begin{align}
x_1\mapsto u_1 &=\frac{f_{b,c}(X,Y,Z)[f_{b,a}(X,Y,X)f_{b,c}(X,Y,Z)+f_{b,0}(X,Y,-1)x_1z_1X^2YZ]}{ac^2x_1y_1z_1^2X^2Y^2Z},\\
X\mapsto U &=\frac{acx_1y_1z_1X^2Y^2}{f_{b,a}(X,Y,X)f_{b,c}(X,Y,Z)+f_{b,0}(X,Y,-1)x_1z_1X^2YZ},\\
y_1\mapsto v_1 &=\frac{x_1y_1z_1XYZf_{b,0}(X,Y,-1)}{f_{b,a}(X,Y,X)f_{b,c}(X,Y,Z)+f_{b,0}(X,Y,-1)x_1z_1X^2YZ}\\
Y\mapsto V &=\frac{f_{b,a}(X,Y,X)f_{b,c}(X,Y,Z)+f_{b,0}(X,Y,-1)x_1z_1X^2YZ}{acx_1y_1z_1XY}\\
z_1\mapsto w_1 &=\frac{f_{b,a}(X,Y,X)}{ax_1XY},\\ 
Z\mapsto W &=\frac{acx_1y_1z_1XY^2Z}{f_{b,a}(X,Y,X)f_{b,c}(X,Y,Z)+f_{b,0}(X,Y,-1)x_1z_1X^2YZ},
\end{align}
\end{subequations}
where $f_{b,c}(X,Y,Z)=bx_1z_1XZ+y_1Y(c-Z)$, on the invariant leaves
\begin{align}
    A_a&:=\{(x_1,x_2,X)\in\mathbb{C}^3:X-x_1x_2X^2=a\},\quad B_b:=\{(y_1,y_2,Y)\in\mathbb{C}^3:Y-y_1y_2Y^2=b\},\nonumber\\ C_c&:=\{(z_1,z_2,Z)\in\mathbb{C}^3:Z-z_1z_2Z^2=c\}.\label{inv-leaves-DNLS}
\end{align}
\item[\textbf{3.}] Map \eqref{Tet-DNLS} admits the following invariants:
\begin{equation}\label{invariants-DNLS}
    I_1=\frac{X}{Z},\qquad  I_2=XY,\qquad I_3=Z-bX-aYZ+\frac{y_1Y(a-X)(c-Z)}{x_1z_1XZ}-\frac{x_1z_1XZ(b-Y)}{y_1Y}.
\end{equation}
\end{enumerate}
\end{theorem}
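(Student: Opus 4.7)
The overall strategy is to follow the same three-step verification used in the NLS case above, with the added complication that the DNLS invariants are quadratic in the spectral-level variables $X, Y, Z$.

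For Part \textbf{1}, the key observation is that $\det M(x_1,x_2,X)=X-x_1x_2X^2=\Phi$, and because the $3\times 3$ matrices $L_{ij}$ carry a trivial $1$ on the remaining diagonal one has $\det L_{ij}=\det M$. Taking determinants of \eqref{DNLS-Lax-Tetra} therefore yields the product identity
\[
(U-u_1u_2U^2)(V-v_1v_2V^2)(W-w_1w_2W^2)=\Phi\,\Psi\,\Omega.
\]
The defining constraint built into map \eqref{Tet-DNLS-9D} is exactly $U-u_1u_2U^2=\Phi$, so $\Phi$ is immediately preserved. The product identity alone only pins down the joint product of the other two factors, so to obtain $\Psi$ and $\Omega$ individually I would substitute the explicit formulas for $v_1,v_2,V$ and for $w_1,w_2,W$ directly into $V-v_1v_2V^2$ and $W-w_1w_2W^2$; after clearing the common denominator
\[
D:=1+x_2y_1z_2Z+(y_1y_2Y-1)\bigl[x_1(x_2+y_2z_1Y)X+z_1z_2Z\bigr],
\]
the expressions collapse to $Y-y_1y_2Y^2$ and $Z-z_1z_2Z^2$ respectively. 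This is routine but heavy algebra.

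For Part \textbf{2}, I would parametrise each leaf by solving its defining relation for the second component: on $A_a$ one has $x_2=(X-a)/(x_1X^2)$, and similarly on $B_b$, $C_c$. Substituting into the nine components \eqref{Tet-DNLS-9D} and simplifying (in particular, the repeated combination $y_1y_2Y-1$ becomes $-b/Y$ and likewise for the other leaves) eliminates $x_2,y_2,z_2$ and produces formulas depending only on the six coordinates $(x_1,X,y_1,Y,z_1,Z)$ and on the parameters; these should match \eqref{Tet-DNLS}, with the numerator/denominator packaging expressed via the shorthand $f_{b,c}(X,Y,Z)=bx_1z_1XZ+y_1Y(c-Z)$. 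The leaf invariance from Part \textbf{1} guarantees that the eliminated images $u_2,v_2,w_2$ are still consistent with the leaf relations, so the dynamics genuinely restricts. The parametric tetrahedron equation for $T_{a,b,c}$ is then inherited from the tetrahedron equation for the ambient map $T$ (established in the preceding proposition) together with invariance of $A_a\times B_b\times C_c$.

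For Part \textbf{3}, the invariants $I_1$ and $I_2$ are transparent from the formulas in \eqref{Tet-DNLS}: $U$ and $W$ share the common denominator $f_{b,a}(X,Y,X)f_{b,c}(X,Y,Z)+f_{b,0}(X,Y,-1)x_1z_1X^2YZ$ with numerators differing by a clean factor of $X/Z$, giving $U/W=X/Z$, and the product $UV$ telescopes to $XY$. The invariance of $I_3$ is the main obstacle: one must verify that the corresponding combination in $(u_1,v_1,w_1,U,V,W)$ reduces, after clearing denominators, to a polynomial identity in $x_1,y_1,z_1,X,Y,Z$ with parameters $a,b,c$. The strategy I would follow is to group terms according to powers of $y_1Y$ and $x_1z_1XZ$ and use the leaf relations to replace quantities like $x_2X-a/X$ systematically; the resulting identity is feasible by hand but is most efficiently handled in a computer algebra system. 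Throughout, the principal obstacle is computational rather than conceptual, and the verification of $I_3$ is by far the most demanding step.
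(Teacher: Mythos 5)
Your proposal follows essentially the same route as the paper: Part 1 is checked by direct substitution (the determinant identity $(U-u_1u_2U^2)(V-v_1v_2V^2)(W-w_1w_2W^2)=\Phi\Psi\Omega$ you invoke is exactly the relation the paper records just before defining map \eqref{Tet-DNLS-9D}, and you are right that it only fixes the product, so the individual verification of $\Psi$ and $\Omega$ is still needed); Part 2 is obtained by solving the leaf relations for $x_2,y_2,z_2$ and substituting; Part 3 is a direct check, with $I_3$ being the heavy one. Incidentally, your formula $x_2=(X-a)/(x_1X^2)$ is the correct solution of $X-x_1x_2X^2=a$ (the paper prints $x_2=(X-a)/(x_1X)$, which is a typo). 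The one genuine divergence is how the parametric tetrahedron equation for \eqref{Tet-DNLS} is established: the paper verifies it by direct substitution into \eqref{Par-Tetrahedron-eq}, whereas you argue it is inherited from the tetrahedron property of the ambient nine-dimensional map together with the invariance of $A_a\times B_b\times C_c$ (and of the analogous leaves in the remaining slots) under every factor $T^{ijk}$. Your structural argument is sound because each leaf is a graph over $(x_1,X)$, so the projection intertwines the ambient and reduced dynamics; it spares a second large symbolic computation, at the cost of having to state the conjugation/projection argument carefully, while the paper's brute-force check avoids that bookkeeping. Either way the proof is correct.
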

\begin{proof}
Concerning \textbf{1}. The existence of these invariants is indicated by the existence of the first integral \eqref{sl2-D-con-det-gen}. This can be verified by straightforward calculation.

With regards to \textbf{2}, we set $\Phi=a$, $\Psi=b$ and $\Omega=c$. Now, using the conditions $x_2=\frac{X-a}{x_1X}$, $y_2=\frac{Y-b}{y_1Y}$ and $z_2=\frac{Z-c}{z_1Z}$, we eliminate $x_2$, $y_2$ and $z_2$ from  the nine-dimensional map \eqref{Tet-DNLS-9D}, and we obtain \eqref{Tet-DNLS}. It can be verified by substitution that map \eqref{Tet-DNLS} satisfies the parametric tetrahedron equation. For the involutivity check, we have that
$w_1(u_1,U,v_1,V,w_1,W)\neq z_1$, thus $T_{a,b,c}\circ T_{a,b,c}\neq \id$, and the map is noninvolutive.

Finally, regarding \textbf{3}, it can be readily proven that $\frac{U}{W}\overset{\eqref{Tet-DNLS}}{=}\frac{X}{Z}$, $UV\overset{\eqref{Tet-DNLS}}{=}XY$ and $W-bU-aVW+\frac{v_1V(a-U)(c-W)}{u_1w_1UW}-\frac{u_1w_1UW(b-V)}{v_1V}\overset{\eqref{Tet-DNLS}}{=}Z-bX-aYZ+\frac{y_1Y(a-X)(c-Z)}{x_1z_1XZ}-\frac{x_1z_1XZ(b-Y)}{y_1Y}$.
\end{proof}

\begin{corollary}
Map \eqref{Tet-DNLS} has the following Lax representation
\begin{equation}\label{DNLS-Lax-Tet}
    M_{12}(u_1,U;a)M_{13}(v_1,V;b)M_{23}(w_1,W;c)= M_{23}(z_1,Z;c)M_{13}(y_1,Y;b)M_{12}(x_1,X;a),
\end{equation}
where the associated matrix is given by
\begin{equation}\label{M-DNLS-Tet}
  M(x_1,X;a)=\left(
     \begin{array}{cc}
         X+a & x_1 X\\
         \frac{X-a}{x_1X} & 1
     \end{array}\right).
\end{equation}
\end{corollary}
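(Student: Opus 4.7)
The plan is to deduce the corollary directly from the nine-dimensional Lax representation already in hand. On the invariant leaves $A_a$, $B_b$, $C_c$ defined in \eqref{inv-leaves-DNLS}, the constraints $X-x_1x_2X^2=a$, $Y-y_1y_2Y^2=b$, $Z-z_1z_2Z^2=c$ can be solved algebraically for $x_2,y_2,z_2$, giving $x_2=(X-a)/(x_1X^2)$, and analogously for the other two. Substituting these into the nine-dimensional matrix \eqref{M-DNLS} eliminates the second components, producing a matrix that depends only on $(x_1,X;a)$ and which, up to a harmless regrouping (using $x_1x_2X=(X-a)/X$), matches the stated \eqref{M-DNLS-Tet}.

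The next step is to exploit the fact, established in the preceding proposition, that the nine-dimensional map \eqref{Tet-DNLS-9D} satisfies the trifactorisation \eqref{DNLS-Lax-Tetra}, together with item \textbf{1} of the preceding theorem, which asserts that \eqref{Tet-DNLS-9D} preserves the invariants $\Phi,\Psi,\Omega$. Consequently the restriction of \eqref{Tet-DNLS-9D} to $A_a\times B_b\times C_c$ -- which by item \textbf{2} of that theorem is precisely \eqref{Tet-DNLS} -- automatically satisfies the restricted identity \eqref{DNLS-Lax-Tet} with the restricted matrix. Thus the corollary will follow once the restricted matrix is identified with \eqref{M-DNLS-Tet}.

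As a direct alternative and independent sanity check, I would substitute \eqref{M-DNLS-Tet} into both sides of \eqref{DNLS-Lax-Tet}, expand the $3\times 3$ triple products, and compare entry by entry. This produces a system of scalar equations in $(u_1,U,v_1,V,w_1,W)$ whose common denominator on the right is precisely $f_{b,a}(X,Y,X)f_{b,c}(X,Y,Z)+f_{b,0}(X,Y,-1)x_1z_1X^2YZ$, the same denominator appearing throughout \eqref{Tet-DNLS}. Substituting \eqref{Tet-DNLS} therefore reduces every equation to a polynomial identity in the base variables $x_1,y_1,z_1,X,Y,Z$ and the parameters $a,b,c$.

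The main obstacle is purely computational rather than conceptual: the explicit rational expressions in \eqref{Tet-DNLS} are bulky and the matrix products contain many cancellations, so a by-hand verification is impractical and a computer algebra system is effectively required. Conceptually the statement is expected, since whenever a Lax-type identity is respected by an invariant foliation of the phase space, the restricted map inherits a Lax representation on the leaves; the content of the corollary is the concrete identification of the two-parameter matrix \eqref{M-DNLS-Tet} as the carrier of that restricted representation.
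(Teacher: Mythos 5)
Your proposal is correct and follows essentially the same route as the paper, which states the corollary without proof precisely because it is the immediate consequence of restricting the nine-dimensional Lax identity \eqref{DNLS-Lax-Tetra} to the invariant leaves \eqref{inv-leaves-DNLS} and eliminating $x_2,y_2,z_2$ from \eqref{M-DNLS}. One caveat: that substitution (with $x_2=(X-a)/(x_1X^2)$) reproduces the off-diagonal and lower-right entries of \eqref{M-DNLS-Tet} but leaves the $(1,1)$ entry as $x_1$ (or $X$, if one corrects \eqref{M-DNLS} to agree with the Darboux matrix \eqref{DT-sl2-gen} and the determinant relation), not $X+a$, so your assertion of a match ``up to harmless regrouping'' glosses over what appears to be a typo in the paper's \eqref{M-DNLS-Tet} rather than a step you can actually carry out as written.
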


\begin{remark}\normalfont
The invariants \eqref{invariants-DNLS} do not follow from the trace of the right-hand side of \eqref{DNLS-Lax-Tet}. Matrix $M$ in \eqref{M-DNLS-Tet} is not symmetric.
\end{remark}

\begin{remark}\normalfont
Although the birationality of map \eqref{Tet-DNLS} does not follow automatically from \eqref{NLS-Lax-Tet}, due to a symmetry break, solving equation \eqref{DNLS-Lax-Tet} for $(x_1,X,y_1,Y,z_1,Z)$, one can see that map \eqref{Tet-DNLS} is indeed birational.
\end{remark}

\subsection{A novel `degenerated' six-dimensional parametric tetrahedron map of DNLS type}
Now, we make use of the second Darboux matrix of DNLS type, namely matrix \eqref{DT-sl2-degen} to derive another parametric tetrahedron map. Specifically, we change $(\lambda^2f,\lambda^{-1}p)\rightarrow (x_1,x_2)$ in \eqref{DT-sl2-degen}, and define the following matrix
\begin{equation} \label{M-DNLS-deg}
M(x_1,x_2;a) := \lambda^{2}\left(\begin{array}{cc} x_1 & x_1x_2\\ \frac{a}{x_1x_2} & 0 \end{array}\right).
\end{equation}
For matrix \eqref{M-DNLS-deg} we consider the following matrix trifactorisation problem:
\begin{equation}\label{DNLS-Deg-Lax-Tetra}
    M_{12}(u_1,u_2,a)M_{13}(v_1,v_2,b)M_{23}(w_1,w_2,c)= M_{23}(z_1,z_2,c)M_{13}(y_1,y_2,b)M_{12}(x_1,x_2,a).
\end{equation}
The above equation implies the system of polynomial equations
\begin{eqnarray*}
&u_1v_1=x_1y_1,~~u_1(u_2w_1+c\frac{v_1v_2}{w_1w_2})=x_1x_2y_1,~~u_1u_2w_1w_2=y_1y_2,~~\frac{av_1}{u_1u_2}=\frac{az_1}{x_1x_2}+\frac{bx_1z_1z_2}{y_1y_2}&\\
&\frac{acv_1v_2}{u_1u_2w_1w_2}=\frac{bx_1x_2z_1z_2}{y_1y_2},~~\frac{b}{v_1v_2}=\frac{ac}{x_1x_2z_1z_2},&
\end{eqnarray*}
which defines the following correspondence
\begin{equation}\label{corr-DNLS-Deg}
u_2=\frac{ax_1^2x_2y_1^2y_2}{u_1^2z_1(ay_1y_2+bx_1^2x_2z_2)},~~v_1=\frac{x_1y_1}{u_1},~~v_2=\frac{bx_2z_1z_2}{acy_1}u_1,~~w_1=\frac{u_1z_1}{x_1},~~w_2=\frac{y_2}{x_1x_2}+\frac{bx_1z_2}{ay_1}
\end{equation}
namely, $u_2,v_1,v_2$ and $w_1$ depend on $u_1$.

As in the case of the NLS equation, this correspondence does not define a tetrahedron map for any choice of $u_1=y_1$. However, for the choice $u_1=y_1$ we have the following.

\begin{theorem}
The map defined by
\begin{equation}\label{Tet-DNLS-Deg} 
(x_1,x_2,y_1,y_2,z_1,z_2)\overset{T_{a,b,c}}{\longrightarrow }\left(y_1,\frac{ax_1^2x_2y_2}{(a y_1y_2+bx_1^2x_2z_2)z_1},x_1,\frac{bx_2z_1z_2}{ac},\frac{y_1z_1}{x_1},\frac{ay_1y_2+bx_1^2x_2z_2}{ax_1x_2y_1}\right)
\end{equation}
is a six-dimensional parametric tetrahedron map, it is noninvolutive and birational. Additionally, map \eqref{Tet-DNLS-Deg} admits the following invariants:
\begin{equation}\label{invariants-deg-DNLS} 
    I_1=x_1+y_1,\qquad  I_2=y_1z_1,\qquad I_3=\frac{x_1}{z_1}.
\end{equation}
\end{theorem}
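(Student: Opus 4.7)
My plan mirrors the structure used for the analogous NLS-type map \eqref{Tet-NLS-Deg}. First, I would substitute $u_1 = y_1$ into the correspondence \eqref{corr-DNLS-Deg} implied by the refactorisation \eqref{DNLS-Deg-Lax-Tetra}; the formulas for $v_1$ and $w_1$ collapse immediately to $x_1$ and $y_1 z_1/x_1$, a cancellation of $y_1^2$ in the numerator of $u_2$ produces the reduced form, and analogously for $v_2$ and $w_2$, establishing that \eqref{Tet-DNLS-Deg} is a well-defined rational map.

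The central step, and the main obstacle, is verifying the parametric tetrahedron equation \eqref{Par-Tetrahedron-eq}. I would compute both sides $T^{123}_{a,b,c}\circ T^{145}_{a,d,e}\circ T^{246}_{b,d,f}\circ T^{356}_{c,e,f}$ and $T^{356}_{c,e,f}\circ T^{246}_{b,d,f}\circ T^{145}_{a,d,e}\circ T^{123}_{a,b,c}$ on a generic point of $\mathcal{X}^6$ and compare the twelve pairs of rational expressions in twelve variables and six parameters. After clearing denominators this reduces to a polynomial identity best handled with computer algebra. Conceptually the identity is expected from associativity of matrix multiplication underlying \eqref{DNLS-Deg-Lax-Tetra}, but since the refactorisation yields only a correspondence, consistency of the specific branch $u_1 = y_1$ with both compositions still requires explicit checking.

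The remaining three assertions follow directly from the explicit form of \eqref{Tet-DNLS-Deg}. For noninvolutivity I would evaluate one component under $T_{a,b,c}^2$: for instance $v_2 \circ T_{a,b,c} = by_2/(ac)\not\equiv y_2$, so $T_{a,b,c}\circ T_{a,b,c}\neq\id$. For birationality, the identities $u_1 = y_1$, $v_1 = x_1$ and $w_1 = y_1 z_1/x_1$ yield $y_1, x_1, z_1$ immediately as rational functions of $(u_1, v_1, w_1)$, after which $x_2, y_2, z_2$ are recovered rationally from $u_2, v_2, w_2$, giving an explicit rational inverse. For the invariants I would simply verify $u_1 + v_1 = x_1 + y_1$, $v_1 w_1 = x_1(y_1 z_1/x_1) = y_1 z_1$, and $u_1/w_1 = y_1/(y_1 z_1/x_1) = x_1/z_1$, which confirms $I_1, I_2, I_3$ in \eqref{invariants-deg-DNLS}.
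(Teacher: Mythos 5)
Your proposal is correct and follows essentially the same route as the paper: substitute $u_1=y_1$ into the correspondence \eqref{corr-DNLS-Deg}, verify the parametric tetrahedron equation by direct (computer-assisted) substitution, check noninvolutivity via $v_2\circ T_{a,b,c}=by_2/(ac)\neq y_2$, exhibit a rational inverse, and verify the three invariants by the same one-line computations. The paper simply writes out the explicit inverse map rather than describing the recovery of $(x_1,y_1,z_1)$ and then $(x_2,y_2,z_2)$, but the content is identical.
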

\begin{proof}
By substitution of $u_1=y_1$ to \eqref{corr-DNLS-Deg}, we obtain map \eqref{Tet-DNLS-Deg}. The tetrahedron property can be verified by substitution to the tetrahedron equation. 

We have that $T_{a,b,c}\circ T_{a,b,c}\neq\id$, since, for instance, $v_2(u_1,u_2,v_1,v_2,w_1,w_2)=\frac{by_2}{ac}\neq y_2$. Thus, the map is noninvolutive. Moreover, the inverse of map \eqref{Tet-DNLS-Deg} is given by
$$
(u_1,u_2,v_1,v_2,w_1,w_2)\overset{T_{a,b,c}^{-1}}{\longrightarrow }\left(v_1,\frac{u_2w_1^2w_2+c v_1v_2}{v_1w_1w_2},u_1,u_2w_1w_2,\frac{v_1w_1}{u_1},\frac{acu_1v_2w_2}{bcv_1v_2+bu_2w_1^2w_2}\right),
$$
i.e. \eqref{Tet-DNLS-Deg} is birational.

Finally, we have
$$
u_1+v_1\overset{\eqref{Tet-DNLS-Deg}}{=} x_1+y_1,\quad v_1w_1\overset{\eqref{Tet-DNLS-Deg}}{=} y_1z_1,\quad\text{and}\quad \frac{u_1}{w_1}\overset{\eqref{Tet-DNLS-Deg}}{=}\frac{x_1}{z_1},
$$
i.e. map \eqref{Tet-NLS-Deg} admits the invariants $I_i$, $i=1,2,3$, given by \eqref{invariants-deg-DNLS}.
\end{proof}

%\subsubsection{Complete integrability}
%The six-dimensional map \eqref{Tet-DNLS-Deg} admits three invariants \eqref{invariants-deg-DNLS} which indicates integrability. For its complete integrability, we have the following.

\subsection{Restriction on the level sets}
Now, we restrict the map \eqref{Tet-DNLS-Deg} on the level sets of its invariants \eqref{invariants-deg-DNLS}. Specifically, we have the following.

\begin{proposition}
Map \eqref{Tet-DNLS-Deg} can be restricted to a three-dimensional noninvolutive parametric tetrahedron map given by \eqref{Tet-NLS-Deg-Restr}.
\end{proposition}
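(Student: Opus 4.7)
The plan mirrors the argument used for the restriction of map \eqref{Tet-NLS-Deg} onto the level sets of its invariants. The three invariants of map \eqref{Tet-DNLS-Deg} given in \eqref{invariants-deg-DNLS} depend only on $x_1,y_1,z_1$, so fixing their values determines these three coordinates (up to a branch choice) and reduces the dynamics to the remaining coordinates $x_2,y_2,z_2$.

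Concretely, I would set $I_1=2$, $I_2=I_3=1$ in \eqref{invariants-deg-DNLS}. The system $x_1+y_1=2$, $y_1z_1=1$, $x_1/z_1=1$ is equivalent to $x_1=z_1$, $y_1=1/x_1$, and $x_1+1/x_1=2$, whose unique solution is $x_1=y_1=z_1=1$. Substituting these values into \eqref{Tet-DNLS-Deg} collapses the first, third and fifth components to the constants $u_1=v_1=w_1=1$ (consistent with invariance), while the remaining three components become
\begin{equation*}
u_2=\frac{a x_2 y_2}{a y_2+b x_2 z_2},\qquad v_2=\frac{b x_2 z_2}{ac},\qquad w_2=\frac{a y_2+b x_2 z_2}{a x_2}.
\end{equation*}
After relabelling $(x_2,y_2,z_2,u_2,v_2,w_2)\to(x,y,z,u,v,w)$, this is exactly map \eqref{Tet-NLS-Deg-Restr}.

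The parametric tetrahedron property is inherited because map \eqref{Tet-NLS-Deg-Restr} has already been shown to satisfy the parametric tetrahedron equation in the previous section, and noninvolutivity likewise follows from the previously established fact that $(v\circ T_{a,b,c})(x,y,z)=\frac{by}{ac}\neq y$, so that $T_{a,b,c}\circ T_{a,b,c}\neq\id$. The only nontrivial step is the algebraic substitution reducing \eqref{Tet-DNLS-Deg} to the three-dimensional map; this is a direct computation and the only potential pitfall is confirming that $x_1=y_1=z_1=1$ is indeed the unique solution (rather than one of several branches) of the chosen level-set equations, which the quadratic $x_1+1/x_1=2$ guarantees.
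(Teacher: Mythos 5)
Your proof is correct and follows essentially the same route as the paper: fix the level sets of the invariants \eqref{invariants-deg-DNLS}, solve for $x_1,y_1,z_1$, substitute into \eqref{Tet-DNLS-Deg}, and relabel the remaining coordinates. The only difference is cosmetic: you choose $(I_1,I_2,I_3)=(2,1,1)$, which gives $x_1=y_1=z_1=1$ and lands on \eqref{Tet-NLS-Deg-Restr} with a plain relabelling, whereas the paper takes $(1,\tfrac12,\tfrac12)$, obtains $x_1=y_1=\tfrac12$, $z_1=1$, and must absorb factors of $2$ via the relabelling $(y_2,u_2)\rightarrow(y/2,u/2)$.
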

\begin{proof}
Setting $I_1=1$, $I_2=\frac{1}{2}$ and $I_3=\frac{1}{2}$, where $I_i$, $i=1,2,3$, are given by \eqref{invariants-deg-DNLS}, we obtain $x_1=y_1=\frac{1}{2}$ and $z_1=1$. We substitute to \eqref{Tet-DNLS-Deg}, and we obtain a map $x_2\mapsto u_2(x_2,y_2,z_2)$, $y_2\mapsto v_2(x_2,y_2,z_2)$ and $z_2\mapsto w_2(x_2,y_2,z_2)$. After relabelling $(x_2,y_2,z_2,u_2,v_2,w_2)\rightarrow (x,y/2,z,u/2,v,w)$, we obtain map \eqref{Tet-NLS-Deg-Restr}.
\end{proof}

\section{Conclusions}
In this paper, we present novel solutions to the functional tetrahedron and parametric tetrahedron equation which are noninvolutive and birational. Noninvolutivity is an important property for a map, since involutive maps have trivial dynamics. 

Our approach is based on the study of matrix trifactorisation problems for Darboux transformations, and we study the cases of Darboux transformations associated with the NLS and DNLS equation. Specifically, in the NLS case, we construct the tetrahedron map \eqref{Tet-NLS-9D} which can be restricted to a parametric tetraedron map \eqref{Tet-NLS}. Additionally, we derive map \eqref{Tet-NLS-Deg} which can be restricted to map \eqref{Tet-NLS-Deg-Restr} on the level sets of its invariants. The latter at a certain limit gives map (20) in Sergeev's classification \cite{Sergeev}.  In DNLS case, we construct the tetrahedron map \eqref{Tet-DNLS-9D} which can be restricted to a parametric tetraedron map \eqref{Tet-DNLS}. Moreover, we construct map \eqref{Tet-DNLS-Deg} which can be also restricted on the level sets of its invariants to map \eqref{Tet-NLS-Deg-Restr}.

%We prove that all the tetrahedron maps derived in this paper are completely integrable. Moreover, they possess a quite simple form and, thus, could be used in various applications.

Our results could be extended in the following way:
\begin{enumerate}
   \item Study the integrability of the derived maps;
    \item Study the more general matrix trifactorisation problem \eqref{Lax-Tetra} with a spectral parameter where matrices $L_{ij}$, $i,j=1,2,3$, $i<j$, depend on a spectral parameter $\lambda$;
    \item Find solutions to the entwining parametric tetrahedron equation;
    \item Study the corresponding $3D$-lattice equations;
    \item Extend the results to the case of Grassmann algebras.
\end{enumerate}

Regarding 1, the existence of invariants \eqref{invariants}, \eqref{invariants-deg}, \eqref{invariants-DNLS} and \eqref{invariants-deg-DNLS} for maps \eqref{Tet-NLS}, \eqref{Tet-NLS-Deg},  \eqref{Tet-DNLS} and  \eqref{Tet-DNLS-Deg}, respectively, is already a sign of integrability of these maps. In fact, in all these cases, the derived invariants are functionally independent and they are in number half as the dimension of the corresponding maps. In order to claim complete integrability in the Liouville sense, one needs to find a Poisson bracket with respect to which the invariants are in involution.

Concerning 2, although the Darboux transformations employed in this paper depend on a spectral parameter $\lambda$, in the consideration of the associated matrix refactorisation problem \eqref{Lax-Tetra} the spectral parameter was rescaled; see matrices \eqref{M-NLS},  \eqref{M-NLS-deg},  \eqref{M-DNLS} and  \eqref{M-DNLS-deg}. In fact, if we let the latter matrices depend explicitly on the spectral parameter, then equation \eqref{Lax-Tetra} has no solutions for $u$ and $v$. The question arises as to whether equation \eqref{Lax-Tetra} defines tetrahedron maps for certain matrices $A, B, C$ and $D$ which explicitly depend on a spectral parameter $\lambda$.

With regards to 3, we mean the following equation
$$
Q^{123}\circ R^{145} \circ S^{246}\circ T^{356}=T^{356}\circ S^{246}\circ R^{145}\circ Q^{123},
$$
and its parametric version
$$
Q^{123}_{a,b,c}\circ R^{145}_{a,d,e} \circ S^{246}_{b,d,f}\circ T^{356}_{c,e,f}=T^{356}_{c,e,f}\circ S^{246}_{b,d,f}\circ R^{145}_{a,d,e}\circ Q^{123}_{a,b,c}.
$$
For $Q\equiv R\equiv S \equiv T$, we obtain equations \eqref{Tetrahedron-eq} and \eqref{Par-Tetrahedron-eq}, respectively.  Similarly, to \cite{Sokor-Pap}, we believe that solutions to these equations can be obtained by studying trifactorisation problems of Darboux transformations together with their degenerated versions. For $Q\equiv R\equiv S$, a Hirota type entwining tetrahedron map appears in \cite{Kassotakis}.

Regarding 4, it makes sense to study whether the maps derived in this paper are related to certain integrable $3D$-lattice equations. Some of the available methods in the literature are the following: i. The relation between tetrahedron maps and $3D$-lattice equations via symmetries of the latter which was established in \cite{Kassotakis-Tetrahedron}. ii. In \cite{Kouloukas-Dihn, Pap-Tongas} it was demonstrated how to lift lattice equations to Yang--Baxter maps and these ideas can be employed to the case of tetrahedron maps. iii. Since the invariants of the maps \eqref{Tet-NLS-Deg}, \eqref{Tet-NLS-Deg-2}, \eqref{Tet-NLS-Deg}, \eqref{Tet-DNLS-Deg} are in separated form, one may employ the method presented in \cite{Pavlos-Maciej, Pavlos-Maciej-2}.

Concerning 5, the extension of integrable lattice equations to the case of Grassmann algebras via Grassmann extended Darboux transformations in \cite{Georgi-Sasha} motivated the initiation of the extension of the theory of Yang--Baxter maps to the noncommutative (Grassmann) case \cite{GKM, Sokor-Kouloukas, Sokor-Sasha-2, Sokor-2020}. The study of matrix trifactorisation problems \eqref{Lij-mat} for Grassmann extended Darboux transformations may lead to solutions of the Grassmann extended tetrahedron and parametric tetrahedron equation. Fully noncommutative versions of tetrahedron maps are found in \cite{Doliwa-Kashaev}.

\section{Acknowledgements}
This work was funded by the Russian Science Foundation (project number 20-71-10110). I would like to thank P. Kassotakis and A.V. Mikhailov for useful discussions, and D. Talalaev for useful discussions and introducing me to the tetrahedron equation.

\end{document}